\documentclass[11pt]{article}
\usepackage{amsfonts,amsthm,amssymb}
\usepackage[pdftex]{graphicx}
\usepackage{color}
\usepackage[fleqn]{amsmath}
\usepackage{hyperref}
\textheight23.5cm
\textwidth17cm
\evensidemargin-0.cm
\oddsidemargin-0.cm
\topmargin-1.5cm

\newcommand{\be}{\begin{eqnarray}}
\newcommand{\ee}{\end{eqnarray}}
\newcommand{\bez}{\begin{eqnarray*}}
\newcommand{\eez}{\end{eqnarray*}}
\renewcommand{\d}{\mathrm{d}}
\newcommand{\bd}{\bar{\mathrm{d}}}

\theoremstyle{definition}
\newtheorem{theorem}{Theorem}[section]
\newtheorem{proposition}[theorem]{Proposition}
\newtheorem{corollary}[theorem]{Corollary}
\newtheorem{remark}[theorem]{Remark}
\newtheorem{example}[theorem]{Example}

\numberwithin{theorem}{section}
\numberwithin{equation}{section}

\allowdisplaybreaks

\begin{document}

\title{\bf A vectorial binary Darboux transformation for the first member of 
the negative part of the AKNS hierarchy} 
\author{
 \sc Folkert M\"uller-Hoissen \\ \small 
 Institut für Theoretische Physik, Friedrich-Hund-Platz 1,
 37077 G\"ottingen, Germany \\
 \small  folkert.mueller-hoissen@phys.uni-goettingen.de
}

\date{} 

\maketitle

\begin{abstract}
Using bidifferential calculus, we derive a vectorial binary Darboux transformation for the first member 
of the ``negative'' part of the AKNS hierarchy. A reduction leads to the first ``negative flow" of the 
NLS hierarchy, which in turn is a reduction of a rather simple nonlinear complex PDE in two dimensions, with a leading 
mixed third derivative. This PDE may be regarded as describing geometric dynamics of a complex scalar field 
in one dimension, since it is invariant under coordinate transformations in one of the two independent variables. 
We exploit the correspondingly reduced vectorial binary Darboux transformation to generate multi-soliton 
solutions of the PDE, also with additional rational dependence on the independent variables, and 
on a plane wave background. This includes rogue waves.
\end{abstract}

\section{Introduction}
The main subject of this work is the third-order nonlinear PDE\footnote{The factor $2$ can be 
eliminated by a rescaling of $f$. 
A leading mixed third derivative also appears, for example, in the physically relevant Benjamin-Bona-Mahony \cite{BBM72} and the Joseph-Egri equation \cite{Jose+Egri77}. A mixed third derivative also appears in the Camassa-Holm equation \cite{Cama+Holm71} and in a deformed Hunter-Zheng equation \cite{BDP04}.}
\be
          \left( \frac{f_{xt}}{f} \right)_t + 2 \, (f^\ast f)_x  = 0 \, ,    \label{3rd-order_eq}
\ee
where $f$ is a complex function of two independent real variables $x$ and $t$, and $f^\ast$ is the complex conjugate 
of $f$.  A subscript denotes a partial derivative with respect to one of the independent variables. 
An evident property of (\ref{3rd-order_eq}) is the following. 

\begin{proposition}
\label{prop:gencov}
If $f(x,t)$ solves (\ref{3rd-order_eq}), then also $f(\sigma(x),t)$, with an arbitrary differentiable function $\sigma(x)$.
  \hfill $\qed$
\end{proposition}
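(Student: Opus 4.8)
The plan is to verify directly that replacing the argument $x$ by $\sigma(x)$ in a solution preserves equation (\ref{3rd-order_eq}), by tracking how each term transforms under the substitution. Let $g(x,t) := f(\sigma(x),t)$, where $\sigma$ is an arbitrary differentiable function of $x$ alone (with $t$ held fixed). The key observation is that $\sigma$ depends on $x$ but not on $t$, so all $t$-derivatives commute cleanly with the substitution, while $x$-derivatives merely pick up factors of $\sigma'(x)$ via the chain rule.

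First I would compute the two building blocks appearing in the equation. For the mixed derivative, since $\sigma_t = 0$, I get $g_t(x,t) = f_t(\sigma(x),t)$ and then $g_{xt}(x,t) = \sigma'(x)\, f_{xt}(\sigma(x),t)$. Dividing by $g = f(\sigma(x),t)$ yields
\be
  \frac{g_{xt}}{g}(x,t) = \sigma'(x)\, \frac{f_{xt}}{f}(\sigma(x),t) \, . \label{ratio-transf}
\ee
Applying one more $t$-derivative (again $\sigma'$ is inert under $\partial_t$) gives $\left(\frac{g_{xt}}{g}\right)_t(x,t) = \sigma'(x)\left(\frac{f_{xt}}{f}\right)_t(\sigma(x),t)$. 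For the second term, $g^\ast g = |f|^2$ evaluated at $(\sigma(x),t)$, and its $x$-derivative again produces a single chain-rule factor: $(g^\ast g)_x(x,t) = \sigma'(x)\,(f^\ast f)_x(\sigma(x),t)$.

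The point is that both terms of the equation, evaluated for $g$ at the point $(x,t)$, equal $\sigma'(x)$ times the corresponding terms for $f$ evaluated at $(\sigma(x),t)$. Adding them, the common factor $\sigma'(x)$ pulls out:
\be
  \left(\frac{g_{xt}}{g}\right)_t + 2\,(g^\ast g)_x
   = \sigma'(x)\left[ \left(\frac{f_{xt}}{f}\right)_t + 2\,(f^\ast f)_x \right]_{(\sigma(x),t)} = 0 \, ,
\ee
where the bracket vanishes because $f$ solves (\ref{3rd-order_eq}). Hence $g$ is a solution as well. The reason the argument works so smoothly — and the only thing one really has to check — is that every surviving $x$-derivative in (\ref{3rd-order_eq}) is a single (outermost) differentiation in $x$, so the chain rule contributes exactly one factor $\sigma'(x)$ to each term, making that factor a common multiplier rather than generating additional terms involving $\sigma''$. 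I do not anticipate any genuine obstacle here; the only point requiring mild care is confirming that no higher $x$-derivatives act on the composition (which would bring in $\sigma''$ and spoil the homogeneity), and indeed the sole $x$-derivative in the first term sits outside the ratio $f_{xt}/f$ rather than inside it.
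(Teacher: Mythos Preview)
Your proof is correct and is exactly the direct chain-rule verification the paper has in mind; the paper itself states the proposition as evident and gives no proof beyond the $\qed$. One small slip in your closing commentary: the sole $x$-derivative in the first term sits \emph{inside} the ratio (in $f_{xt}$), not outside it --- the outer derivative is $\partial_t$ --- but your actual computation handles this correctly, so the argument stands.
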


This expresses the fact that (\ref{3rd-order_eq}) is invariant under coordinate  transformations $x \mapsto \sigma(x)$ 
in one dimension, and $f$ can be regarded as a scalar. 
A generalization of (\ref{3rd-order_eq}) to higher dimensions is the system
\be
       \frac{\partial}{\partial t} \Big( f^{-1} \, \frac{\partial}{\partial t} \frac{\partial}{\partial x^\mu} f \Big) 
    + 2 \, \frac{\partial}{\partial x^\mu} (f^\ast f)  = 0 \qquad \quad
       \mu =1,\ldots,m \, .  \label{m-dim_3rd-order_eq}
\ee
It behaves as the components of a covector (tensor of type (0,1)) under 
general coordinate transformations in $m$ dimensions, if $f$ is a scalar, also depending on a parameter $t$. 
This system thus defines dynamics of a scalar field on an $m$-dimensional differentiable manifold. Obviously, 
the following holds.

\begin{proposition}
If $f(x,t)$ solves (\ref{3rd-order_eq}), then $f(\sigma(x^1,\ldots,x^m),t)$, with an arbitrary differentiable function 
$\sigma$ of real independent variables $x^\mu$, $\mu=1,\ldots,m$, solves (\ref{m-dim_3rd-order_eq}).  \hfill $\qed$
\end{proposition}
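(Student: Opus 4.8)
The plan is to substitute the composed function directly into the $m$-dimensional system (\ref{m-dim_3rd-order_eq}) and to show that each of the $m$ equations collapses to the one-dimensional equation (\ref{3rd-order_eq}), multiplied by a single chain-rule factor. Set $g(x^1,\ldots,x^m,t) := f(\sigma(x^1,\ldots,x^m),t)$ and abbreviate $\sigma_\mu := \partial\sigma/\partial x^\mu$, noting that $\sigma$, and hence each $\sigma_\mu$, is independent of $t$. First I would record the first-order derivatives given by the chain rule,
\be
   \frac{\partial g}{\partial x^\mu} = f_x\,\sigma_\mu \,, \qquad
   \frac{\partial g}{\partial t} = f_t \,,
\ee
where $f_x$ and $f_t$ are understood to be evaluated at the argument $(\sigma,t)$.

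Next I would differentiate the mixed expression occurring in (\ref{m-dim_3rd-order_eq}). Because $\sigma_\mu$ does not depend on $t$, the $t$-derivative passes through it untouched, so that
\be
   \frac{\partial}{\partial t}\frac{\partial}{\partial x^\mu} g = f_{xt}\,\sigma_\mu \,,
   \qquad
   g^{-1}\,\frac{\partial}{\partial t}\frac{\partial}{\partial x^\mu} g
   = \frac{f_{xt}}{f}\,\sigma_\mu \,.
\ee
A further $t$-derivative, again leaving $\sigma_\mu$ inert, gives $\partial_t\big(g^{-1}\partial_t\partial_{x^\mu}g\big) = (f_{xt}/f)_t\,\sigma_\mu$. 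For the nonlinear term I would use that, $\sigma$ being real, complex conjugation commutes with the substitution, so $g^\ast g = (f^\ast f)(\sigma,t)$ and hence $\partial_{x^\mu}(g^\ast g) = (f^\ast f)_x\,\sigma_\mu$, once more carrying the single factor $\sigma_\mu$. Adding the two contributions, every term of the $\mu$-th equation shares the common factor $\sigma_\mu$, and what remains in the bracket is exactly the left-hand side of (\ref{3rd-order_eq}) evaluated at $(\sigma,t)$:
\be
   \frac{\partial}{\partial t}\Big( g^{-1}\,\frac{\partial}{\partial t}\frac{\partial}{\partial x^\mu} g\Big) + 2\,\frac{\partial}{\partial x^\mu}(g^\ast g)
   = \Big[\Big(\frac{f_{xt}}{f}\Big)_t + 2\,(f^\ast f)_x\Big]\,\sigma_\mu \,.
\ee
Since $f$ solves (\ref{3rd-order_eq}), the bracket vanishes identically, so the whole expression vanishes for each $\mu=1,\ldots,m$, which is the claim.

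The computation is an elementary chain-rule exercise, so there is no genuine obstacle; the single point deserving care is that both outer differentiations in the first term of (\ref{m-dim_3rd-order_eq}) act in $t$, so they never fall on $\sigma$ and therefore produce no additional terms involving second derivatives $\partial_\mu\partial_\nu\sigma$. This is precisely the structural reason why each equation factors as a scalar multiple of the one-dimensional equation, and hence why solving (\ref{3rd-order_eq}) is already sufficient to solve (\ref{m-dim_3rd-order_eq}).
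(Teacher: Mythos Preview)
Your proof is correct: the paper itself gives no argument beyond the word ``Obviously'' and a $\qed$, and the direct chain-rule verification you wrote out is exactly the computation that justifies this. The one implicit assumption you made explicit---that $\sigma$ is real-valued so that conjugation commutes with the substitution---is indeed what the context intends.
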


(\ref{m-dim_3rd-order_eq}) can also be written as
\bez
       \frac{\partial}{\partial t} \Big( f^{-1} \, \frac{\partial}{\partial t} \, d f \Big) 
    + 2 \, d (f^\ast f)  = 0 \, ,  
\eez
where $d$ is the exterior derivative on the $m$-dimensional differentiable manifold. 

(\ref{3rd-order_eq}) arises, via a reduction, from the first ``negative flow'' 
of the AKNS hierarchy, which is related to the complex sine-Gordon equation \cite{AFGZ00}. 
In \cite{Kamc+Pavl02} a relation with the sharp line self-induced transparency (SIT) equations 
has been observed, also see \cite{DMH10AKNS,DKMH11acta}. Furthermore, it is connected 
with a 2-component Camassa-Holm equation \cite{CLZ06,AGZ06a}.
Soliton solutions have been found in \cite{JZZ09}, using Hirota's bilinear method. Other methods 
have been applied in particular in \cite{DMH10AKNS} (see Section~5.1 therein), \cite{DKMH11acta}, 
and \cite{Veks12} (see (4.28) therein). 

For \emph{real} $f$, (\ref{3rd-order_eq}) is among the simplest completely integrable PDEs and it has the peculiar 
property mentioned above. In this work, we derive a binary Darboux transformation 
(see, e.g., \cite{Matv+Sall91}) for (more precisely, a reduction of) (\ref{3rd-order_eq}) with \emph{complex} $f$, 
by using a general result of bidifferential calculus, which is 
recalled in Section~\ref{subsec:bDT_in_bdc} in a self-contained way (see, e.g., \cite{DMH20dc} 
for an introduction to bidifferential calculus and references), and demonstrate its use for 
finding soliton solutions. 
We proceed beyond the class of ``simple solitons", which are rational expressions built from 
trigonometric and hyperbolic functions of linear combinations of the independent variables. 
There are also regular solutions which additionally depend \emph{rationally} on the variables $x$ and $t$, as in 
the case of the related Nonlinear Schr\"odinger (NLS) equation (see, in particular, \cite{Pere83,Manas96,AAS09}). 

A binary Darboux transformation for (\ref{3rd-order_eq}) has already been obtained in
\cite{WRH18,Waja+Riaz19,Amja+Khan21}.\footnote{These authors write  (\ref{3rd-order_eq}) in the form
$u_{xt} + 2 u \, \partial_t \partial_x^{-1} (|u|^2) = u$ (hence $x$ and $t$ are exchanged relative to our notation). 
But the term on the right hand side should actually be multiplied by an arbitrary function of $t$, since 
this is the freedom in the definition of an inverse of $\partial_x$. It seems that the authors of 
\cite{Amja+Khan21} wanted to fix the freedom by demanding 
$u \to 0$ as $|x| \to \infty$. But how can this then be reconciled with exact solutions in their 
Section~6, having a non-zero constant background? Throughout our work, there will be no need for 
introducing the inverse of a differential operator.}
The advantages of a \emph{vectorial} binary Darboux transformation, involving 
a ``spectral matrix" instead of a spectral parameter in the underlying linear system, are summarized 
in Remark~\ref{rem:Lax} below. Vectorial binary Darboux transformations for the prominent NLS equation, 
for example, appeared in \cite{Manas96,CDMH16,CMH17}  (also see \cite{Guil+Manas96,CDMH16} for corresponding treatments 
of Davey-Stewartson equations, which possess a reduction to the NLS equation). Such an efficient solution-generating method has been 
obtained by now for many completely integrable equations. It is automatically available if the corresponding equation
possesses a bidifferential calculus formulation, see Section~\ref{sec:bDT}, and many examples can be found 
in particular in \cite{CDMH16} and references cited there. For a different framework, see \cite{Sakh94}. 

In Section~\ref{sec:sym} we recall some useful Lie point symmetries of (\ref{3rd-order_eq}) and traveling 
wave solutions in terms of elementary Jacobi elliptic functions (cf. \cite{AKCBA16} and references cited there 
for corresponding solutions of the ``positive'' part of the NLS hierarchy). 

Section~\ref{sec:sol} presents our version of an $n$-fold binary Darboux transformation for a reduction of (\ref{3rd-order_eq}), 
which we then exploit to find multi-soliton solutions. This includes solitons superposed on a plane wave background. Section~\ref{sec:bDT} presents a derivation of the binary Darboux transformation, more generally 
for the first ``negative flow" of the AKNS hierarchy. 
Finally, Section~\ref{sec:conclusion} contains some concluding remarks.

\section{Some symmetries of the PDE and traveling wave solutions}
\label{sec:sym}
(\ref{3rd-order_eq}) admits the following symmetry transformations:
\begin{itemize}
\item $x \mapsto \sigma(x)$, see Proposition~\ref{prop:gencov}.
\item $t \mapsto \pm t + \alpha$, $\alpha \in \mathbb{R}$.
\item $t \mapsto \pm |\beta| \, t$, $f \mapsto \beta \, f$, $\beta \in \mathbb{C}$, $\beta \neq 0$.
\item $f \mapsto e^{\mathrm{i} \, \varphi_0} \, f$, $\varphi_0 \in \mathbb{R}$. 
\item Complex conjugation of $f$.
\end{itemize}
In the following, solutions will typically be presented modulo these symmetries. 
\vspace{.2cm}

Let us assume that $f$ is real and, in some coordinate $x$, has the form
\bez
          f(x,t) = f(x \pm c \,t) \, ,
\eez
with a real constant $c  > 0$. Then (\ref{3rd-order_eq}) reduces to the ODE\footnote{This is equation 7.7 in \cite{Kamke77}.}
\bez
                \frac{f''}{f} +  \frac{2}{c^2} f^2 = k \, ,
\eez
with a real constant $k$. Exclusively in this section, a prime indicates a 
derivative with respect to the argument of the function $f$. Solutions of this equation are 
provided by the Jacobi elliptic functions $\mathrm{cn}$ and $\mathrm{dn}$ (see \cite{Abra+Steg65}, for example). 
Indeed,
\bez
         f_{\mathrm{cn}} = \sqrt{c} \, \sqrt{m} \; \mathrm{cn}(\frac{1}{\sqrt{c}} (x \pm c \, t) \, | \, m)    
\eez
solves the ODE with $k = (2 m -1)/c$. We note that
\bez
         f_{\mathrm{cn}} =  \sqrt{c} \; \mathrm{sech}( \frac{1}{\sqrt{c}} (x \pm c \, t))  \qquad \mbox{if} \quad m=1  \, .
\eez
We will recover this solitary wave as a single soliton solution in Section~\ref{sec:sol}.
Furthermore,
\bez
         f_{\mathrm{dn}} = \sqrt{c} \; \mathrm{dn}(\frac{1}{\sqrt{c}} (x \pm c \, t) \, | \, m)    
\eez
satisfies the ODE with $k = (2-m)/c$. We note that
\bez
         f_{\mathrm{dn}} =  \sqrt{c} \; \mathrm{sech}( \frac{1}{\sqrt{c}} (x \pm c \, t) )  \qquad \mbox{if} \quad m=1  \, .
\eez

\section{A binary Darboux transformation and soliton solutions} 
\label{sec:sol}
(\ref{3rd-order_eq}) will be treated in the following form,
\be
         a_t = (f^\ast f)_x \, , \qquad f_{xt} + 2 a \, f = 0  \, .   \label{a,f_PDE}
\ee
This system is invariant under a coordinate transformation $x \mapsto x'$ if 
the function $a$ transforms as  $a \mapsto a' = (\partial x/\partial x') \, a$. 

If $f(x,t)$ is allowed to be complex, then (\ref{3rd-order_eq}) is most likely \emph{not} completely integrable 
\cite{Sako22}. But the reduction of (\ref{a,f_PDE}), obtained by restricting the function $a$ to be real, possesses 
a Lax pair (cf. the Remark~\ref{rem:Lax} below) and is thus completely integrable in this sense. Although the first 
of equations (\ref{a,f_PDE}) requires $a_t$ to be real, this does not exclude an imaginary part of $a$. 
As pointed out in \cite{Sako22}, there is thus another reduction of (\ref{a,f_PDE}), where $a = a_1 +  \mathrm{i} \, a_2$, 
with real functions $a_j$, $j=1,2$, $a_2 \neq 0$, $a_{2t}=0$, and this does not pass the Painlev{\'e} test of 
integrability.\footnote{Via the above coordinate invariance, it can then be (at least locally) achieved that 
$a_2$ is a real constant, different from zero.} 

In the following we will only consider the integrable reduction of (\ref{a,f_PDE}), i.e., we will assume that $a$ is \emph{real}.
We next formulate the main result of this work. A derivation and proof is postponed to Section~\ref{sec:bDT}.

\begin{theorem}
\label{thm:f_eq}
Let $a_0,f_0$ be a solution of (\ref{a,f_PDE}) with real $a_0$. Let $n$-component column vectors $\eta_i$, $i=1,2$, 
be solutions of the linear system
\be 
     && \Gamma \, \eta_{1x} = a_0 \, \eta_1 +  f^\ast_{0x} \, \eta_2 \, , \qquad
            \Gamma \, \eta_{2x} = -a_0 \, \eta_2 + f_{0x} \, \eta_1  \, , \label{eta_x} \\
      && \eta_{1t} = - \frac{1}{2} \Gamma \, \eta_1 + f^\ast_0 \, \eta_2   \, , \qquad
       \eta_{2t} = \frac{1}{2} \Gamma \, \eta_2 -  f_0 \, \eta_1  \, , \label{eta_t}
\ee
where $\Gamma$ is an invertible constant $n \times n$ matrix satisfying the spectrum condition 
$\mathrm{spec}(\Gamma) \cap \mathrm{spec}(-\Gamma^\dagger) = \emptyset$. 
Furthermore, let $\Omega$ be an invertible solution of the Lyapunov equation
\be
     \Gamma \, \Omega + \Omega \, \Gamma^\dagger 
              = \eta_1 \eta_1^\dagger + \eta_2 \eta_2^\dagger \, ,  \label{Lyap} 
\ee
where $^\dagger$ denotes Hermitian conjugation (transposition and complex conjugation). 
Then 
\be
         a = a_0 - (\eta_1^\dagger \, \Omega^{-1} \eta_1)_x \, , \qquad   
         f = f_0 - \eta_1^\dagger \, \Omega^{-1} \, \eta_2    \label{BDT_new_solution_f}
\ee
is also a solution of (\ref{a,f_PDE}). As a consequence, $f$ solves (\ref{3rd-order_eq}).    \hfill $\qed$
\end{theorem}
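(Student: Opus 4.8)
The plan is to substitute the proposed pair $(a,f)$ of (\ref{BDT_new_solution_f}) directly into the two equations (\ref{a,f_PDE}) and reduce each to an algebraic identity among the scalar ``potentials''
\[
  \omega := \eta_1^\dagger \, \Omega^{-1} \eta_1 \, , \qquad
  g := \eta_1^\dagger \, \Omega^{-1} \eta_2 \, , \qquad
  \rho := \eta_2^\dagger \, \Omega^{-1} \eta_2 \, ,
\]
in terms of which $a = a_0 - \omega_x$ and $f = f_0 - g$. Two preliminary facts drive everything. First, taking the Hermitian conjugate of the Lyapunov equation (\ref{Lyap}) shows that $\Omega^\dagger$ solves the \emph{same} Sylvester equation; since the spectrum condition $\mathrm{spec}(\Gamma) \cap \mathrm{spec}(-\Gamma^\dagger) = \emptyset$ makes that equation uniquely solvable, $\Omega$ is Hermitian, so $\omega,\rho$ are real and $\eta_2^\dagger \Omega^{-1}\eta_1 = g^\ast$. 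Second, differentiating (\ref{Lyap}) in $t$ and inserting (\ref{eta_t}) makes the $f_0$-terms cancel pairwise and collapses the result to $\Gamma\,\Omega_t + \Omega_t\,\Gamma^\dagger = -\tfrac12\,\Gamma(\eta_1\eta_1^\dagger - \eta_2\eta_2^\dagger) - \tfrac12(\eta_1\eta_1^\dagger-\eta_2\eta_2^\dagger)\,\Gamma^\dagger$, whence uniqueness gives the closed form $\Omega_t = -\tfrac12(\eta_1\eta_1^\dagger - \eta_2\eta_2^\dagger)$.

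Next I would dispatch the first equation $a_t = (f^\ast f)_x$. Writing $(\Omega^{-1})_t = -\Omega^{-1}\Omega_t\,\Omega^{-1}$ and using the identity $\Gamma^\dagger\Omega^{-1}+\Omega^{-1}\Gamma = \Omega^{-1}(\eta_1\eta_1^\dagger+\eta_2\eta_2^\dagger)\Omega^{-1}$ — which is just (\ref{Lyap}) conjugated by $\Omega^{-1}$ — a short computation gives $\omega_t = -|g|^2 + f_0\,g^\ast + f_0^\ast\,g$, and this equals $-(f^\ast f - f_0^\ast f_0)$ once $f=f_0-g$ is expanded. Hence $\omega_{xt} = -(f^\ast f - f_0^\ast f_0)_x$, and since $a_{0t}=(f_0^\ast f_0)_x$ one obtains $a_t = a_{0t} - \omega_{xt} = (f^\ast f)_x$. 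Notably this step needs only $\Omega_t$, never $\Omega_x$.

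The hard part is the second equation $f_{xt}+2af=0$. Subtracting the seed relation $f_{0xt}+2a_0f_0=0$ reduces it to the identity $g_{xt} + 2a_0\,g + 2\,\omega_x\,f = 0$, and now $\Omega_x$ is unavoidable. Differentiating (\ref{Lyap}) in $x$ and inserting (\ref{eta_x}) produces only $\Gamma\Omega_x+\Omega_x\Gamma^\dagger = \Gamma^{-1}R + R\,(\Gamma^\dagger)^{-1}$ with the Hermitian matrix $R := a_0(\eta_1\eta_1^\dagger-\eta_2\eta_2^\dagger) + f_{0x}^\ast\,\eta_2\eta_1^\dagger + f_{0x}\,\eta_1\eta_2^\dagger$; the factors $\Gamma^{-1}$ show that, unlike $\Omega_t$, the object $\Omega_x$ has no $\Gamma$-free closed form, and indeed $g_x$ and $g_t$ individually retain irreducible terms of the type $\eta_1^\dagger(\Gamma^\dagger\Omega^{-1}-\Omega^{-1}\Gamma)\eta_2$ that the Lyapunov equation does not simplify. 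I expect these to cancel only in the full combination $g_{xt}+2a_0g+2\omega_x f$, after repeated use of (\ref{eta_x})--(\ref{eta_t}) together with both structure identities for $\Omega_x,\Omega_t$; organising that cancellation transparently is the crux. Since this bookkeeping is exactly what the bidifferential-calculus formalism automates, the cleaner route — and the one I would ultimately adopt, in line with Section~\ref{sec:bDT} — is to exhibit (\ref{a,f_PDE}) as the integrability condition of a bidifferential calculus with (\ref{eta_x})--(\ref{eta_t}) as the associated linear system, invoke the general vectorial binary Darboux theorem to produce (\ref{BDT_new_solution_f}) with $\Omega$ governed by (\ref{Lyap}), and then verify only that the Hermitian symmetry of $\Gamma$ and $\Omega$ enforces the reduction, i.e.\ reality of $a$ and the $f^\ast$-pairing. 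The final claim, that $f$ solves (\ref{3rd-order_eq}), is then immediate: eliminating $a = -f_{xt}/(2f)$ between the two equations of (\ref{a,f_PDE}) reproduces $(f_{xt}/f)_t + 2(f^\ast f)_x = 0$.
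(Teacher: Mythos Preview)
Your proposal is correct and, in its final form, coincides with the paper's route: the paper never attempts a direct verification but derives Theorem~\ref{thm:f_eq} entirely by specialising the bidifferential-calculus binary Darboux theorem (Corollary~\ref{cor:phi}) to the AKNS setting, obtaining first the unreduced transformation (Theorem~\ref{thm:fq-system}) for the system $a_t=(fq)_x$, $f_{xt}=-2af$, $q_{xt}=-2aq$, and then imposing the reduction $q=f^\ast$, $\theta=\eta^\dagger$, $\Delta=-\Gamma^\dagger$, which forces $\Omega^\dagger=\Omega$ via uniqueness for the Lyapunov equation and hence the reality of $a$ and the pairing $q'=f'^\ast$.

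Your direct computation of $\omega_t=-|g|^2+f_0 g^\ast+f_0^\ast g$, giving $a_t=(f^\ast f)_x$ without appeal to the abstract machinery, is a genuine addition not present in the paper; it is correct and uses only $\Omega_t$ and the conjugated Lyapunov identity, exactly as you say. Your diagnosis of the second equation is also accurate: the paper avoids the $\Omega_x$ bookkeeping altogether by letting the bidifferential identity $\bd A'-A'^2=0$ carry the cancellation, and your decision to fall back on that route rather than grind through $g_{xt}+2a_0 g+2\omega_x f=0$ by hand matches the paper's choice. In short, you supplement the paper's argument with an elementary proof of the first half, and for the second half you arrive at the same strategy.
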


An application of this theorem essentially reduces to solving the linear system for a given solution $a_0, f_0$ 
of (\ref{a,f_PDE}) and a constant $n \times n$ matrix $\Gamma$, since it is well-known that, under 
the stated spectrum condition, (\ref{Lyap}) has a unique solution $\Omega$ and there are concrete 
expressions for it. In order to obtain a more explicit expression for the generated solution $f$ of (\ref{3rd-order_eq}), 
however, one needs to evaluate the inverse of the matrix $\Omega$, which is getting more and 
more difficult with increasing $n$, of course.\footnote{If $n>2$, $\Omega$ can be decomposed into 
block submatrices and the inverse can be computed using Schur complements, which are 
(special) quasideterminants. The latter are used in the alternative iterative approach to a binary 
Darboux transformation in \cite{WRH18,Waja+Riaz19,Amja+Khan21}. }

Without restriction of generality, $\Gamma$ can be restricted to Jordan normal form.
We also note that $\Omega$ in the preceding theorem is Hermitian (also see (\ref{Omega_Hermitian})) and 
consequently $\det(\Omega)$ is real. If $f_0$ is a regular solution of (\ref{3rd-order_eq}) on $\mathbb{R}^2$, 
a solution $f$ generated via the above theorem can only  be singular if $\Omega$ is not invertible somewhere 
on $\mathbb{R}^2$, i.e., if $\det(\Omega)$ has a zero.

\begin{proposition}
\label{prop:linsys_decoupled}
If $a_0, f_0$, with $f_0 \neq 0$ and real $a_0$, solve (\ref{a,f_PDE}), then the linear system (\ref{eta_x}), (\ref{eta_t}) 
is equivalent to
\be
   && \eta_{1tt}  - \frac{f_{0t}^\ast}{f_0^\ast} \, \eta_{1t} -  \Big( \frac{1}{4} \Gamma^2 
           +  \frac{f_{0t}^\ast}{2 f_0^\ast} \Gamma -|f_0|^2 \Big)  \, \eta_1 = 0 \, , \label{eta_tt} \\
   &&  \Gamma \, \eta_{1x} - \frac{f_{0x}^\ast}{f_0^\ast} \eta_{1t} - \Big( a_0 
       + \frac{f_{0x}^\ast}{2 f_0^\ast} \Gamma \Big) \, \eta_1=0  \, ,  \label{eta_x_eta_t} \\
   && \eta_{2} = \frac{1}{f_0^\ast} \, (\eta_{1t} + \frac{1}{2} \Gamma \, \eta_1) \, . \label{eta2}
\ee
\end{proposition}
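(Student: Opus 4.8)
The plan is to treat the equivalence as two implications, using $f_0 \neq 0$ throughout so that division by $f_0^\ast$ is legitimate. First I would dispose of the easy correspondences. Solving the first equation in (\ref{eta_t}) for $\eta_2$ gives precisely (\ref{eta2}), so (\ref{eta2}) is equivalent to the first equation of (\ref{eta_t}). Substituting (\ref{eta2}) into the first equation of (\ref{eta_x}) eliminates $\eta_2$ and, after the two $\tfrac12 (f_{0x}^\ast/f_0^\ast)\Gamma\eta_1$ contributions combine, yields exactly (\ref{eta_x_eta_t}); since this substitution is invertible, (\ref{eta_x_eta_t}) together with (\ref{eta2}) is equivalent to the first equation of (\ref{eta_x}). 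Thus two of the three equations of the reduced system account for two of the four equations of the original linear system, purely algebraically and with no use of the PDE.

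Next I would derive (\ref{eta_tt}). Differentiating the first equation of (\ref{eta_t}) with respect to $t$ and inserting the second equation of (\ref{eta_t}) to replace $\eta_{2t}$ produces an expression involving $\eta_{1t}$, $\eta_1$ and $\eta_2$; eliminating $\eta_2$ with (\ref{eta2}) and collecting terms, the contributions $\mp\tfrac12\Gamma\eta_{1t}$ cancel and one is left with (\ref{eta_tt}). Conversely, differentiating (\ref{eta2}) in $t$ and substituting (\ref{eta_tt}) for $\eta_{1tt}$ reproduces the second equation of (\ref{eta_t}), where one uses $|f_0|^2/f_0^\ast = f_0$. This step is again entirely algebraic.

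At this point three of the four original equations have been recovered from the reduced system; the remaining one, the second equation in (\ref{eta_x}), is the crux, and it is the only place where the background PDE is needed. I would obtain it from the compatibility of the two $\eta_1$-equations: differentiate the first equation of (\ref{eta_x}) with respect to $t$, differentiate the first equation of (\ref{eta_t}) with respect to $x$ and multiply by the constant matrix $\Gamma$, and equate the two resulting expressions for $\Gamma\eta_{1xt}=\Gamma\eta_{1tx}$. Solving for $f_0^\ast\Gamma\eta_{2x}$ and then substituting the already-available equations of (\ref{eta_t}) for $\eta_{1t}$ and $\eta_{2t}$, the equation of (\ref{eta_x}) for $\Gamma\eta_{1x}$, together with the background relations $a_{0t}=(f_0^\ast f_0)_x$ and $f_{0xt}^\ast=-2a_0 f_0^\ast$ coming from (\ref{a,f_PDE}), all terms proportional to $\Gamma\eta_1$ and to $\Gamma\eta_2$ drop out and one arrives at $f_0^\ast\Gamma\eta_{2x}=f_0^\ast f_{0x}\eta_1 - a_0 f_0^\ast\eta_2$, i.e. the second equation of (\ref{eta_x}) after cancelling $f_0^\ast$.

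The main obstacle is precisely this last computation: keeping track of the several cancellations and deploying the PDE relations at the right moments. It is also the conceptual heart of the statement, since it shows that the second equation of (\ref{eta_x}) is redundant once $a_0,f_0$ solve (\ref{a,f_PDE})—a reflection of the fact that (\ref{a,f_PDE}) is exactly the integrability (compatibility) condition of the linear system (\ref{eta_x}), (\ref{eta_t}). Note that the forward implication actually never invokes the second equation of (\ref{eta_x}), so the nontrivial content of the equivalence is entirely contained in recovering it in the backward direction.
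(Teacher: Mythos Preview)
Your proposal is correct and follows essentially the same route as the paper: identify (\ref{eta2}) with the first of (\ref{eta_t}), obtain (\ref{eta_x_eta_t}) from the first of (\ref{eta_x}) via (\ref{eta2}), obtain (\ref{eta_tt}) from the second of (\ref{eta_t}) via (\ref{eta2}), and recover the second of (\ref{eta_x}) as a consequence of the remaining equations together with (\ref{a,f_PDE}). Your write-up is more explicit about the compatibility computation in the last step, but the argument is the same.
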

\begin{proof}
(\ref{eta2}) is the first equation in (\ref{eta_t}), solved for $\eta_2$. The second of (\ref{eta_t}) is 
then equivalent to (\ref{eta_tt}). By use of (\ref{eta2}), the first of (\ref{eta_x}) takes the form 
(\ref{eta_x_eta_t}). As a consequence of these equations, the second of (\ref{eta_x}) is satisfied 
iff  $a_0, f_0$ solve (\ref{a,f_PDE}).
\end{proof}

\begin{remark}
If $f_{0x} = 0$, the second of equations (\ref{a,f_PDE}) requires $a_0=0$. (\ref{eta_x}) then 
restricts $\eta_1$ and $\eta_2$ to not depend on $x$. Since any function independent of $x$ solves 
(\ref{3rd-order_eq}), such an $f_0$ is not a useful ``seed" for the binary Darboux transformation in 
Theorem~\ref{thm:f_eq}. 
\end{remark}

\begin{remark}
\label{rem:Lax}
Writing (\ref{eta_x}) and (\ref{eta_t}) in the form
\bez
        \left( \begin{array}{c} \eta_1 \\ \eta_2 \end{array} \right)_x 
    =  \left( \begin{array}{cc} a_0 \, \Gamma^{-1} & f_{0x}^\ast \, \Gamma^{-1} \\ f_{0x} \, \Gamma^{-1} & - a_0 \, \Gamma^{-1} \end{array} \right) 
       \left( \begin{array}{c} \eta_1 \\ \eta_2 \end{array} \right) 
       \, , \quad
        \left( \begin{array}{c} \eta_1 \\ \eta_2 \end{array} \right)_t 
    =  \left( \begin{array}{cc} - \frac{1}{2} \Gamma & f_0^\ast I_n \\ -f_0 I_n & \frac{1}{2} \Gamma \end{array} \right) 
       \left( \begin{array}{c} \eta_1 \\ \eta_2 \end{array} \right) \, ,
\eez
where $I_n$ is the $n \times n$ identity matrix, 
constitutes a ``Lax pair'' for (\ref{a,f_PDE}) with \emph{real} $a$, since its integrability condition 
is equivalent to $a_0, f_0$ satisfying (\ref{a,f_PDE}) and $a_0^\ast = a_0$. It should be noticed that  
the usual spectral parameter is promoted to a \emph{matrix} $\Gamma$, which is typical for a \emph{vectorial} generalization 
of a (binary) Darboux transformation. It has the effect that there is no need to consider \emph{iterations} of Darboux 
transformations (in contrast to the older approach taken, e.g., in 
\cite{WRH18,Waja+Riaz19,Amja+Khan21}).\footnote{Nevertheless, the classical 
iterative (forward, backward and binary) Darboux transformations can also be formulated in 
bidifferential calculus, see \cite{DMH20dc}.} 
One obtains the result of an $n$-fold elementary binary Darboux transformation right away 
in a single step and efficient matrix methods can be used to elaborate concrete solutions.  
A particular advantage lies in the fact that important classes of completely 
integrable equations are directly obtained by choosing the ``spectral matrix" to be a (non-diagonal) 
Jordan matrix (see \cite{Manas96} for an early example). This concerns, in particular, the rogue wave solutions of the 
nonlinear Schr\"odinger equation (see \cite{CMH17} and references cited there). In some approaches, 
they can only be obtained indirectly, starting from a solution obtained by a multiple application of an 
elementary binary Darboux transformation, each time with a different value of the spectral parameter   
(which then corresponds, in the vectorial setting, to choosing a diagonal spectral matrix with distinct eigenvalues), 
and then taking suitable coincidence limits of the spectral parameter values (and possibly other parameters). 
\end{remark}

\subsection{Some results about the Lyapunov equation}
If $\Gamma = \mathrm{diag}(\gamma_1,\ldots,\gamma_n)$, with $\gamma_i \neq - \gamma_j^\ast$, $i,j=1,\ldots,n$, 
the solution of (\ref{Lyap}) is the Cauchy-like matrix
\bez
       \Omega = \left( \frac{\eta_{1i} \, \eta_{1j}^\ast + \eta_{2i} \, \eta_{2j}^\ast }{\gamma_i + \gamma_j^\ast} \right) \, .
\eez

In the following subsections we will also consider the case where $\Gamma$ is a Jordan matrix. 
Therefore we recall some results from \cite{CMH17}. 
For fixed $n >1$, let $\eta_i = (\eta_{i1}, \eta_{i2},\ldots,\eta_{in})^T$, $i=1,2$.\footnote{A superscript $T$ indicates
the transpose of a matrix.} 
For $1 \leq k \leq n$, let $\Omega_{(k)}$ be the solution of (\ref{Lyap}), where $\Gamma$ is the $k \times k$ lower 
triangular Jordan matrix
\be
       \Gamma_{(k)} = \left( \begin{array}{ccccc} \gamma  & 0      & \cdots & \cdots & 0      \\ 
                                       1  & \gamma  & \ddots  & \ddots & 0      \\ 
                                       0  & \ddots & \ddots & \ddots & \vdots \\
                                   \vdots & \ddots & \ddots & \ddots & 0      \\
                                       0  & \cdots &  0  &   1    & \gamma 
                 \end{array} \right)  \, ,   \label{Gamma_Jordan_block} 
\ee
and $\eta_i$ is replaced by $\eta_{i (k)} := (\eta_{i1}, \eta_{i2},\ldots,\eta_{ik})^T$, $i=1,2$. 

\begin{proposition}
For $1 \leq k \leq n-1$, we have
\bez
     \Omega_{(k+1)} = \left( \begin{array}{cc}
                \Omega_{(k)} & B_{k+1}  \\
                 B_{k+1}^\dagger & \omega_{k+1} \end{array} \right)  \, ,                    
\eez
where
\bez
  && B_{k+1} := K_{(k)}^{-1} \Big( \eta_{1(k)} \, \eta_{1,k+1}^\ast + \eta_{2(k)} \, \eta_{2,k+1}^\ast
                 - \Omega_{(k)} (0,\ldots,0,1)^T \Big) \, , \\
  && \omega_{k+1} := \frac{1}{\kappa} \Big( |\eta_{1,k+1}|^2 + |\eta_{2,k+1}|^2 
                       - 2 \, \mathrm{Re}[(0,\ldots,0,1) B_{k+1}] \Big) \, ,  \qquad 
        \kappa := 2 \, \mathrm{Re}(\gamma) \, , 
\eez
and $K_{(k)}$ is the Jordan matrix $\Gamma_{(k)}$ with $\gamma$ replaced by $\kappa$.    \qed
\end{proposition}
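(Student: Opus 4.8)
The plan is to exploit the block-triangular structure of the Jordan matrix together with the uniqueness of the solution of the Lyapunov equation. First I would record that, because the only subdiagonal entry linking the top $k$ rows to the last row of $\Gamma_{(k+1)}$ sits at position $(k+1,k)$ and equals $1$,
\[
   \Gamma_{(k+1)} = \begin{pmatrix} \Gamma_{(k)} & 0 \\ e^T & \gamma \end{pmatrix} , \qquad e := (0,\ldots,0,1)^T \in \mathbb{C}^k ,
\]
so that $\Gamma_{(k+1)}^\dagger = \bigl(\begin{smallmatrix} \Gamma_{(k)}^\dagger & e \\ 0 & \gamma^\ast \end{smallmatrix}\bigr)$. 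Since the right-hand side $\eta_{1(k+1)}\eta_{1(k+1)}^\dagger + \eta_{2(k+1)}\eta_{2(k+1)}^\dagger$ is Hermitian and the spectrum condition guarantees uniqueness, the solution $\Omega_{(k+1)}$ is itself Hermitian (apply $\dagger$ to the equation and invoke uniqueness). I may therefore partition it as $\Omega_{(k+1)} = \bigl(\begin{smallmatrix} P & B \\ B^\dagger & \omega \end{smallmatrix}\bigr)$, with $P$ of size $k\times k$ (automatically Hermitian), $B$ a column of height $k$, and $\omega$ real.

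Next I would insert these decompositions into $\Gamma_{(k+1)}\Omega_{(k+1)} + \Omega_{(k+1)}\Gamma_{(k+1)}^\dagger$ and compare the four blocks with the correspondingly partitioned right-hand side, whose blocks are $\eta_{1(k)}\eta_{1(k)}^\dagger + \eta_{2(k)}\eta_{2(k)}^\dagger$ (top-left), $r := \eta_{1(k)}\eta_{1,k+1}^\ast + \eta_{2(k)}\eta_{2,k+1}^\ast$ (top-right), and $|\eta_{1,k+1}|^2 + |\eta_{2,k+1}|^2$ (bottom-right). The $(1,1)$ block reads $\Gamma_{(k)} P + P\Gamma_{(k)}^\dagger = \eta_{1(k)}\eta_{1(k)}^\dagger + \eta_{2(k)}\eta_{2(k)}^\dagger$, which is precisely the Lyapunov equation defining $\Omega_{(k)}$; by uniqueness I conclude $P = \Omega_{(k)}$.

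The $(1,2)$ block gives $\Gamma_{(k)} B + \gamma^\ast B + P e = r$, that is $(\Gamma_{(k)} + \gamma^\ast I_k)\,B = r - \Omega_{(k)} e$. The key observation is that $\Gamma_{(k)} + \gamma^\ast I_k$ carries $\gamma + \gamma^\ast = \kappa$ on its diagonal and $1$ on its subdiagonal, hence equals $K_{(k)}$; inverting reproduces exactly the stated formula for $B_{k+1}$ (note $\Omega_{(k)} e$ is the last column of $\Omega_{(k)}$, and $e^T B$ is its last entry). For the $(2,2)$ block I would use $e^T B + B^\dagger e = 2\,\mathrm{Re}(e^T B)$ to obtain $\kappa\,\omega = |\eta_{1,k+1}|^2 + |\eta_{2,k+1}|^2 - 2\,\mathrm{Re}(e^T B)$, which is $\omega_{k+1}$. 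The $(2,1)$ block is the Hermitian conjugate of the $(1,2)$ block and so holds automatically.

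I do not anticipate a serious obstacle, as the argument is entirely driven by the block-triangular form of the Jordan matrix. The single point needing care is the appeal to uniqueness to identify $P$ with $\Omega_{(k)}$: this requires the spectrum condition for $\Gamma_{(k)}$, which holds because $\mathrm{spec}(\Gamma_{(k)}) = \{\gamma\}$ and $\gamma \neq -\gamma^\ast$ (equivalently $\kappa \neq 0$) is inherited from the hypothesis on $\Gamma_{(k+1)}$. This very same condition is what makes $K_{(k)}$ invertible and keeps $\kappa \neq 0$, so that $B_{k+1}$ and $\omega_{k+1}$ are well defined.
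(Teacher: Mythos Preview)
Your argument is correct: the block decomposition of $\Gamma_{(k+1)}$, the Hermiticity of $\Omega_{(k+1)}$ via uniqueness, and the identification $\Gamma_{(k)} + \gamma^\ast I_k = K_{(k)}$ are exactly what is needed, and your bookkeeping of the four blocks checks out. The paper itself does not supply a proof for this proposition (it is recalled from \cite{CMH17} and marked with a bare $\qed$), so there is nothing to compare against; your direct block computation is the natural route and matches what one would expect the cited reference to contain.
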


This proposition allows to recursively compute the solution of the Lyapunov equation (\ref{Lyap}) with a Jordan 
matrix $\Gamma_{(n)}$. The inverse of $\Omega_{(n)}$ can also be recursively computed via 
\be
   \Omega_{(k+1)}^{-1} = \left( \begin{array}{cc}
      \Omega_{(k)}^{-1} - S_{\Omega_{(k)}}^{-1} \Omega_{(k)}^{-1} B_{k+1} B_{k+1}^\dagger \Omega_{(k)}^{-1}
             & - S_{\Omega_{(k)}}^{-1} \Omega_{(k)}^{-1} B_{k+1}  \\
      - S_{\Omega_{(k)}}^{-1} B_{k+1}^\dagger \Omega_{(k)}^{-1} &  S_{\Omega_{(k)}}^{-1} 
       \end{array} \right) \, ,   \label{Omega_inverse}
\ee
with the scalar Schur complement 
\bez
    S_{\Omega_{(k)}} = \omega_{k+1} - B_{n+1}^\dagger \Omega_{(k)}^{-1} B_{k+1} \, .
\eez

\begin{example}
\label{ex:n=2Jordan_Lyapunov_sol}
 For $n=2$, we obtain
\bez
  \Omega_{(2)} = \frac{1}{ \kappa } \sum_{i=1}^2 \left( \begin{array}{cc}
                |\eta_{i1}|^2 & \eta_{i1} ( \eta_{i2} - \kappa^{-1} \eta_{i1})^\ast  \\
                 \eta_{i1}^\ast ( \eta_{i2} - \kappa^{-1} \eta_{i1}) &
                    | \eta_{i2} - \kappa^{-1} \eta_{i1}|^2 + \kappa^{-2} |\eta_{i1}|^2 
                     \end{array} \right)  \, .
\eez
Its determinant is
\bez
    \det(\Omega_{(2)}) = \kappa^{-4} (|\eta_{11}|^2 + |\eta_{21}|^2)^2 
            + \kappa^{-2} \, |\det(\eta_1,\eta_2)|^2  \, ,
\eez
where $\det(\eta_1,\eta_2) = \eta_{11} \eta_{22} - \eta_{12} \eta_{21}$. 
\end{example}

\begin{proposition}
Let $\Gamma$ be a lower triangular $n \times n$ Jordan matrix with eigenvalue $\gamma$ and
$\mathrm{Re}(\gamma) \neq 0$. If the first component of one of the vectors $\eta_i$, $i=1,2$, is different 
from zero, then the solution of (\ref{Lyap}) is invertible.  \qed
\end{proposition}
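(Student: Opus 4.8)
The plan is to prove invertibility of the (unique, Hermitian) solution $\Omega$ by showing directly that $\ker\Omega=\{0\}$, exploiting the interplay between the Lyapunov equation and the positive semidefiniteness of its right-hand side. Write $Q := \eta_1\eta_1^\dagger + \eta_2\eta_2^\dagger$, which is Hermitian and positive semidefinite, with $v^\dagger Q v = |\eta_1^\dagger v|^2 + |\eta_2^\dagger v|^2$; in particular $Qv=0$ holds iff $v$ is orthogonal to both $\eta_1$ and $\eta_2$. I will use freely that $\Omega$ is Hermitian (noted after Theorem~\ref{thm:f_eq}) and that, by the spectrum condition, $\mathrm{Re}(\gamma)\neq 0$ guarantees $\Omega$ exists and is unique.

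The first step is to show that $\ker\Omega$ is annihilated by $Q$ and is $\Gamma^\dagger$-invariant. Take $v\in\ker\Omega$. Since $\Omega$ is Hermitian, $v^\dagger\Omega=0$ as well, so sandwiching the Lyapunov equation (\ref{Lyap}) between $v^\dagger$ and $v$ gives $v^\dagger Q v = v^\dagger\Gamma(\Omega v) + (v^\dagger\Omega)\Gamma^\dagger v = 0$. Positive semidefiniteness of $Q$ then forces $Qv=0$, so that $\ker\Omega\subseteq\{\eta_1,\eta_2\}^\perp$. Feeding $\Omega v=0$ and $Qv=0$ back into the identity $\Gamma\,\Omega v+\Omega\,\Gamma^\dagger v=Qv$ yields $\Omega\,\Gamma^\dagger v=0$, i.e.\ $\Gamma^\dagger v\in\ker\Omega$; hence $\ker\Omega$ is $\Gamma^\dagger$-invariant.

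The second step injects the Jordan structure. Suppose, for contradiction, that $\ker\Omega\neq\{0\}$. A nonzero $\Gamma^\dagger$-invariant subspace of a finite-dimensional complex space must contain an eigenvector of $\Gamma^\dagger$. Because $\Gamma$ is a single Jordan block, $\Gamma^\dagger$ is again a single (upper-triangular) Jordan block, with eigenvalue $\gamma^\ast$ and one-dimensional eigenspace spanned by $e_1$; therefore $e_1\in\ker\Omega$. By the first step $e_1\perp\eta_1$ and $e_1\perp\eta_2$, which says exactly that the first components of both $\eta_1$ and $\eta_2$ vanish, contradicting the hypothesis. Hence $\ker\Omega=\{0\}$ and $\Omega$ is invertible.

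I expect the only delicate point to be the identification of the minimal $\Gamma^\dagger$-invariant subspace, and this is precisely where the single-Jordan-block hypothesis does the work: $\Gamma^\dagger$ is non-derogatory, so its eigenspace is one-dimensional, and an eigenvector forced to lie in $\{\eta_1,\eta_2\}^\perp$ immediately pins down the first components. For a general $\Gamma$ the clean criterion is lost and one must instead argue that the smallest $\Gamma$-invariant subspace containing $\eta_1,\eta_2$ is all of $\mathbb{C}^n$ (controllability of the pair $(\Gamma,[\eta_1\ \eta_2])$), which is the standard definiteness condition for a Lyapunov equation and can be read off from the integral representation $\Omega=\pm\int_0^\infty e^{\mp\Gamma s}\,Q\,e^{\mp\Gamma^\dagger s}\,\d s$ valid when $\mathrm{Re}(\gamma)\neq 0$. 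The algebraic route above bypasses that machinery, so it is the one I would actually carry out.
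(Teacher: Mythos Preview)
Your argument is correct. The paper does not actually supply a proof of this proposition: it is stated with a bare $\qed$, in keeping with the surrounding material which is recalled from \cite{CMH17}. So there is nothing to compare against line by line.

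That said, your route is exactly the standard one and it works cleanly here. The two steps are both sound: sandwiching (\ref{Lyap}) with a kernel vector and using $\Omega=\Omega^\dagger$ to kill both terms on the left gives $v^\dagger Q v=0$, hence $Qv=0$, and then the Lyapunov identity immediately yields $\Gamma^\dagger$-invariance of $\ker\Omega$. For the second step, your identification of the unique (up to scale) eigenvector of $\Gamma^\dagger$ as $e_1$ is correct for the lower-triangular Jordan block (\ref{Gamma_Jordan_block}), since $\Gamma^\dagger$ is then the upper-triangular Jordan block with eigenvalue $\gamma^\ast$; the contradiction with the hypothesis on the first components follows at once. Your closing remark about the controllability/integral-representation viewpoint is an apt gloss on why the single-block assumption is what makes the criterion so clean, but, as you say, it is not needed for the proof itself.
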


Unfortunately, to our knowledge, a convenient formula for the determinant of the solution of the 
Lyapunov equation with an $n \times n$ Jordan matrix $\Gamma$ 
is only available in the case where its right hand side is a rank one matrix.

\subsection{Zero seed solutions}
If $f_0=0$, we choose $a_0 = - 1/2$. The linear system for $\eta$ then has the solutions
\be
          \eta_1 = \exp\Big( -\frac{1}{2} (\Gamma^{-1} x + \Gamma \, t) \Big) \, v \, , \qquad
          \eta_2 = \exp\Big( \frac{1}{2} (\Gamma^{-1} x + \Gamma \, t) \Big) \, w \, ,    \label{eta_zero_seed}
\ee
where $v,w$ are constant $n$-component column vectors.  The ansatz
\be
       \Omega = e^{ -\frac{1}{2} (\Gamma^{-1} x + \Gamma \, t)} \, X \, 
                           e^{ -\frac{1}{2} (\Gamma^{\dagger-1} x + \Gamma^\dagger \, t)} 
                       + e^{ \frac{1}{2} (\Gamma^{-1} x + \Gamma \, t)} \, Y \, 
                           e^{\frac{1}{2} (\Gamma^{\dagger-1} x + \Gamma^\dagger \, t)} \, ,    \label{Omega_zero_seed}
\ee
with constant $n \times n$ matrices $X,Y$, solves (\ref{Lyap}) if 
\be
            \Gamma \, X + X \Gamma^\dagger = v \, v^\dagger \, , \qquad 
            \Gamma \, Y + Y \Gamma^\dagger = w \, w^\dagger \, .   \label{Lyap_XY}
\ee 
According to Theorem~\ref{thm:f_eq}, 
\be
         f &=&  v^\dagger \, e^{ -\frac{1}{2} (\Gamma^{\dagger-1} x + \Gamma^\dagger \, t)} \,  \Omega^{-1} \, 
                         e^{ \frac{1}{2} (\Gamma^{-1} x + \Gamma \, t)} \, w   \nonumber  \\
           &=&  \frac{1}{\det(\Omega)} \, v^\dagger \, e^{ -\frac{1}{2} (\Gamma^{\dagger-1} x + \Gamma^\dagger \, t)} \, 
                 \mathrm{adj}(\Omega) \, e^{ \frac{1}{2} (\Gamma^{-1} x + \Gamma \, t)} \, w \, ,
                                 \label{zero_seed_multi_soliton}
\ee
where $\mathrm{adj}$ takes the adjugate of a matrix,  
represents (an infinite set of) exact solutions of (\ref{3rd-order_eq}). 
Here we dropped a global minus sign, since $f \mapsto -f$ is a symmetry of (\ref{3rd-order_eq}).  
We also note that $\Gamma \mapsto -\Gamma$ together with $(x,t) \mapsto -(x,t)$ amounts to $f \mapsto -f$. 
Furthermore, $\Gamma \mapsto -\Gamma$ together with exchange of $v$ and $w$ means $f \mapsto -f^\ast$.

\subsubsection{Simple multi-soliton solutions}
These are obtained by choosing $\Gamma = \mathrm{diag}(\gamma_1, \ldots, \gamma_n)$, where 
$\gamma_i^\ast \neq -\gamma_j$, $i,j=1,\ldots,n$. The solutions of the Lyapunov equations (\ref{Lyap_XY}) are 
then the Cauchy-like matrices
\bez
        X = \left( \frac{v_i \, v_j^\ast}{\gamma_i + \gamma_j^\ast} \right) \, , \qquad
        Y = \left( \frac{w_i \, w_j^\ast}{\gamma_i + \gamma_j^\ast} \right) \, .
\eez
Substituting these expressions in (\ref{Omega_zero_seed}), (\ref{zero_seed_multi_soliton}) provides us 
with an exact solution of (\ref{3rd-order_eq}) for any $n \in \mathbb{N}$.

\paragraph{$\mathbf{n=1}$.} 
In this case, (\ref{zero_seed_multi_soliton}) becomes
\bez
          f = 2 \, \mathrm{Re}(\gamma)  \, v^\ast w \, \Big( |v|^2 e^{-(\gamma^{-1} x + \gamma \, t)} 
                             + |w|^2  e^{\gamma^{\ast-1} x + \gamma^\ast \, t} \Big)^{-1} \, . 
\eez
If $\gamma,v,w$ are real, this can be rewritten, up to a global sign, as
\be
          f =  \gamma \,  \mathrm{sech}( \gamma^{-1} x + \gamma \, t + \alpha )  \, ,   \label{1soliton}
\ee
where $\alpha = \ln(|w/v|)$. 
The solitary wave has the form of the bright soliton of the NLS equation and the solitary wave of 
the modified KdV (mKdV) equation. 

\paragraph{$\mathbf{n=2}$.} (\ref{zero_seed_multi_soliton}) yields
\bez
         f &=& \frac{1}{2 \mathrm{Re}(\gamma_1) \, (\gamma_1+\gamma_2^\ast) \, \det(\Omega)} 
                  e^{ - \mathrm{Re}(\gamma_1) (t+x/|\gamma_1|^2) + \mathrm{i} \, \mathrm{Im}(\gamma_2) (t-x/|\gamma_2|^2)}
            \Big( (\gamma_2^\ast-\gamma_1^\ast) |v_1|^2 v_2^\ast w_2     \\
          & & + ( \gamma_1+\gamma_2^\ast) v_2^\ast |w_1|^2  w_2 \, e^{2 \mathrm{Re}(\gamma_1)(t+x/|\gamma_1|^2)} 
                 - 2 \mathrm{Re}(\gamma_1) v_1^\ast w_1 |w_2|^2 \, e^{(\gamma_1+\gamma_2^\ast) t +(\gamma_1^{-1}
                +\gamma_2^{\ast-1}) x} \Big) \\
         & &+ \frac{1}{2 \mathrm{Re}(\gamma_2) \, (\gamma_1^\ast+\gamma_2) \, \det(\Omega)} 
                  e^{ - \mathrm{Re}(\gamma_2) (t+x/|\gamma_2|^2) + \mathrm{i} \, \mathrm{Im}(\gamma_1) (t-x/|\gamma_1|^2)}
            \Big( ( \gamma_1^\ast-\gamma_2^\ast) v_1^\ast |v_2|^2 w_1     \\
          & & + (\gamma_1^\ast+\gamma_2) v_1^\ast w_1 |w_2|^2 \, e^{2 \mathrm{Re}(\gamma_2)(t+x/|\gamma_2|^2)} 
                 - 2 \mathrm{Re}(\gamma_2) v_2^\ast |w_1|^2 w_2 \, e^{(\gamma_1^\ast+\gamma_2) \, t +(\gamma_1^{\ast-1}
                + \gamma_2^{-1}) \, x} \Big) \, , 
\eez
where
\bez
         \det(\Omega) &=& \frac{ e^{ - \mathrm{Re}(\gamma_1 + \gamma_2) \, t - \mathrm{Re}(\gamma_1^{-1}+\gamma_2^{-1}) \, x}}
                            {4 \, \mathrm{Re}(\gamma_1) \, \mathrm{Re}(\gamma_2) \,  |\gamma_1+\gamma_2^\ast|^2}
          \Big( |\gamma_1+\gamma_2^\ast|^2 \, \Big| v_2 w_1 e^{\gamma_1 t + \gamma_1^{-1} x}  
               - v_1 w_2  e^{\gamma_2 t + \gamma_2^{-1} x}\Big|^2 \\
         & &   + |\gamma_1 - \gamma_2|^2 \, \Big| v_1 v_2^\ast + w_1 w_2^\ast \, 
             e^{ ( \gamma_1 + \gamma_2^\ast ) \, t + (\gamma_1^{-1} + \gamma_2^{\ast-1} ) \, x} \Big|^2 \Big) \, ,
\eez
which we were able to express in an explicitly non-negative form.

\begin{proposition}
The 2-soliton solution is regular if $\gamma_1, \gamma_2 \neq 0$, $\gamma_1 \neq \gamma_2$,  $\gamma_1 \neq -\gamma_2^\ast$,
$\{v_1,w_1\} \neq \{0\}$ and $\{v_2,w_2\} \neq \{0\}$.
\end{proposition}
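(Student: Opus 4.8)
The plan is to prove regularity by showing that $\det(\Omega)$ has no zero on $\mathbb{R}^2$. By (\ref{zero_seed_multi_soliton}) the generated solution $f$ — and likewise $a = a_0 - (\eta_1^\dagger \, \Omega^{-1} \eta_1)_x$ — is a quotient whose numerator is built entirely from exponential (hence entire) functions of $(x,t)$, so its only possible singularities are the zeros of the denominator $\det(\Omega)$. Since $\Omega$ is Hermitian, $\det(\Omega)$ is real, and it therefore suffices to establish $\det(\Omega) \neq 0$ pointwise.

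I would start from the explicitly non-negative expression for $\det(\Omega)$ displayed just above the proposition. Its scalar prefactor $e^{\cdots}/(4\,\mathrm{Re}(\gamma_1)\,\mathrm{Re}(\gamma_2)\,|\gamma_1+\gamma_2^\ast|^2)$ is finite and nonzero: the exponential never vanishes, while $\mathrm{Re}(\gamma_1),\mathrm{Re}(\gamma_2)\neq 0$ (a consequence of the spectrum condition $\mathrm{spec}(\Gamma)\cap\mathrm{spec}(-\Gamma^\dagger)=\emptyset$ with $\Gamma=\mathrm{diag}(\gamma_1,\gamma_2)$) and $\gamma_1\neq-\gamma_2^\ast$. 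The two coefficients $|\gamma_1+\gamma_2^\ast|^2$ and $|\gamma_1-\gamma_2|^2$ inside the bracket are strictly positive precisely because $\gamma_1\neq-\gamma_2^\ast$ and $\gamma_1\neq\gamma_2$. Hence a zero of $\det(\Omega)$ would force \emph{both} moduli in the bracket to vanish at the same point.

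Writing $E_j := e^{\gamma_j t + \gamma_j^{-1}x}$ (well defined and nonzero since $\gamma_1,\gamma_2\neq 0$, with $e^{\gamma_2^\ast t + \gamma_2^{\ast-1}x}=E_2^\ast$ because $x,t$ are real), the simultaneous vanishing reads $v_2 w_1 E_1 = v_1 w_2 E_2$ and $v_1 v_2^\ast = -\,w_1 w_2^\ast E_1 E_2^\ast$. Showing these cannot hold at any $(x,t)$ is the heart of the argument, and the step I expect to be the main obstacle. The clean route is to multiply the first relation by $w_2^\ast E_2^\ast$ and insert the second; this eliminates $(x,t)$ and yields $v_1\big(|v_2|^2+|w_2|^2\,|E_2|^2\big)=0$. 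As $\{v_2,w_2\}\neq\{0\}$ makes the bracket positive, this forces $v_1=0$; then $\{v_1,w_1\}\neq\{0\}$ gives $w_1\neq 0$, and resubstituting into the two relations forces successively $v_2=0$ and $w_2=0$, contradicting $\{v_2,w_2\}\neq\{0\}$.

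This contradiction shows $\det(\Omega)$ cannot vanish, so the generated $f$ (and $a$) is regular on all of $\mathbb{R}^2$. I would stress that every hypothesis is consumed: $\gamma_1\neq-\gamma_2^\ast$ and $\gamma_1\neq\gamma_2$ guarantee positivity of the bracket coefficients (and nonvanishing of the prefactor), $\gamma_1,\gamma_2\neq 0$ secure the definition and nonvanishing of the $E_j$, and the two conditions $\{v_1,w_1\}\neq\{0\}$, $\{v_2,w_2\}\neq\{0\}$ drive the final elimination to its contradiction. The only genuinely nontrivial point is this elimination; everything else is bookkeeping around the already-given non-negative form of $\det(\Omega)$.
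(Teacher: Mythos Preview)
Your proof is correct and follows essentially the same approach as the paper: both arguments use the explicitly non-negative form of $\det(\Omega)$ and show that the two squared moduli cannot vanish simultaneously. The only difference is organizational---the paper first assumes $v_2 w_1 w_2 \neq 0$, solves for the exponentials, and derives the contradiction $e^{2\,\mathrm{Re}(\varphi_2)} = -|v_2|^2/|w_2|^2$, then treats the vanishing cases separately; your elimination (multiplying rather than dividing) reaches $v_1\bigl(|v_2|^2+|w_2|^2\,|E_2|^2\bigr)=0$ directly and handles all cases in one stroke, which is marginally cleaner. One small phrasing point: the step does not literally ``eliminate $(x,t)$'' since $|E_2|^2$ remains, but the conclusion $v_1=0$ is of course $(x,t)$-independent because the bracket is strictly positive.
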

\begin{proof}
Assuming $v_2 w_1 w_2 \neq 0$, for a zero of $\det(\Omega)$ we would need
\bez
       e^{\varphi_1 - \varphi_2} = \frac{v_1 w_2}{v_2 w_1} \, , \qquad
          e^{\varphi_1 + \varphi_2} = - \frac{v_1 v_2^\ast}{w_1 w_2^\ast} \, ,
\eez
where $\varphi_k = \gamma_k t + \gamma_k^{-1} x$, $k=1,2$.  Hence 
\bez
           e^{2 \, \mathrm{Re}(\varphi_2)} = - \frac{|v_2|^2}{|w_2|^2} \, ,
\eez
which contradicts the positivity of the real exponential function.
If any of $v_2,w_1,w_2$ is zero, a zero of $\det(\Omega)$ is only possible if either
$v_1$ and $w_1$, or $v_2$ and $w_2$ are zero, which we excluded. 
\end{proof}

\begin{example}
\label{ex:2-soliton}
Choosing $n=2$, $\gamma_1=1$, $\gamma_2=2$ and  $v_1=v_2=w_1=w_2 =1$, (\ref{zero_seed_multi_soliton}) 
yields the special real 2-soliton solution
\bez
        f = 6 \, \frac{ \cosh(2 t + \frac{1}{2} x) - 2 \cosh(t+x) }{ \cosh(3 t + \frac{3}{2} x) + 9 \cosh(t - \frac{1}{2} x) -8 } \, .
\eez
A plot is shown in Fig.~\ref{fig:2soliton}.
\begin{figure}[h]
\begin{center}
\includegraphics[scale=.4]{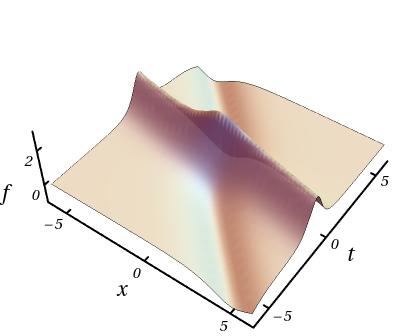} 
\parbox{15cm}{
\caption{Plot of the real 2-soliton solution in Example~\ref{ex:2-soliton}.
\label{fig:2soliton} } 
}
\end{center}
\end{figure}
\end{example}

\subsubsection{Solitons associated with non-diagonal Jordan matrices}
\label{subsec:zero_seed_Jordan}
In contrast to simple solitons, solutions determined by (\ref{zero_seed_multi_soliton}), with  
$\Gamma$ chosen as a non-diagonal Jordan matrix, depend also rationally on $x$ and $t$ (cf.  \cite{Manas96} 
for the NLS case).  

For $n=2$ and $\Gamma = \Gamma_{(2)}$, we have
\bez
     \eta_1 = e^{- \frac{1}{2 \gamma}(x + \gamma^2 t)} \left( \begin{array}{c} v_1 \\  \frac{v_1}{2 \gamma^2} (x - \gamma^2 t) + v_2
                       \end{array}\right) \, , \qquad
     \eta_2 = e^{ \frac{1}{2 \gamma}(x + \gamma^2 t)} \left( \begin{array}{c} w_1 \\  -\frac{w_1}{2 \gamma^2} (x - \gamma^2 t) + w_2
                       \end{array}\right) \, , 
\eez
Using Example~\ref{ex:n=2Jordan_Lyapunov_sol}, we obtain
\bez
      f &=& \frac{1}{4 \, \mathrm{Re}(\gamma)^3 \,  \det(\Omega)} \Big(  w_1^2 \, \Big[ \mathrm{Re}(\gamma) 
                \, v_1^\ast w_1^\ast (x/\gamma^{\ast 2} - t) - \mathrm{Re}(\gamma) \,  (v_1 w_2 - v_2 w_1)^\ast 
               + v_1^\ast w_1^\ast \Big] \, e^{x/\gamma + \gamma \, t} \\
    && - v_1^{\ast 2} \, \Big[ \mathrm{Re}(\gamma) \, v_1 w_1  (x/ \gamma^2 - t) 
         - \mathrm{Re}(\gamma) \,  (v_1 w_2 -v_2 w_1) - v_1 w_1  \Big] 
         \, e^{-x/\gamma^\ast - \gamma^\ast \, t} \Big)  \, , 
\eez
where 
\bez
         \det(\Omega) &=&  \frac{1}{16\, \mathrm{Re}(\gamma)^4} \, \Big( 
           \Big( | v_1|^2 \, \left| e^{-x/\gamma - \gamma \, t} \right| + |w_1|^2 \, \left| e^{x/\gamma + \gamma \, t} \right| \Big)^2 \\
      &&  + 4 \, \mathrm{Re}(\gamma)^2 \, \big| v_1 w_2 -v_2 w_1 - v_1 w_1 \, (x/\gamma^2 -t) \big|^2 \Big) \, .
\eez

\begin{example}
\label{ex:Jordan-soliton}
Choosing $\gamma = v_1=v_2=w_1=w_2 =1$, yields 
\bez
        f = 4 \, \frac{ \cosh(x+t) + (x - t) \, \sinh(x+t)}{1 + 2 (x-t)^2 + \cosh(2 (x+t)) } \, .
\eez
A plot is shown in Fig.~\ref{fig:Jordan-soliton}.
\begin{figure}[h]
\begin{center}
\includegraphics[scale=.4]{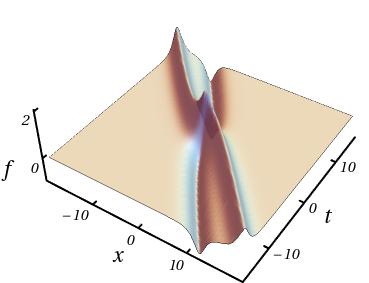} 
\parbox{15cm}{
\caption{Plot of the real solution in Example~\ref{ex:Jordan-soliton}.
\label{fig:Jordan-soliton} } 
}
\end{center}
\end{figure}
\end{example}

It is straightforward to work out solutions associated with Jordan matrices with $n>2$ and, more generally, solutions 
where $\Gamma$ is composed of different Jordan blocks, representing a nonlinear superposition of corresponding 
elementary solitons.

\subsection{Solutions with a plane wave background}
\label{subsec:pw_background}
Choosing as the ``seed" $f_0$ the plane wave solution
\be
          f_0 = C \, e^{\mathrm{i} \, (\alpha \, x - \beta \, t)} \, ,       \label{plane_wave_seed}
\ee
with a complex constant $C \neq 0$ and real constants $\alpha$ and $\beta$, (\ref{a,f_PDE}) determines 
$a_0= -\alpha \beta/2$. Writing
\be
     \eta_1 = e^{ -\frac{1}{2} \mathrm{i} \, (\alpha \, x - \beta \, t) } \, \tilde{\eta}_1 \, ,  \qquad
     \eta_2 = e^{ \frac{1}{2} \mathrm{i} \, (\alpha \, x - \beta \, t) } \, \tilde{\eta}_2 \, ,   \label{eta->teta}
\ee
the linear system, as rewritten in Proposition~\ref{prop:linsys_decoupled}, is converted into
\be
 &&  \tilde{\eta}_{1tt}  - ( \frac{1}{4} \tilde{\Gamma}^2 - |C|^2 ) \, \tilde{\eta}_1 = 0  \, ,  \qquad
   \mathrm{i} \, \alpha \, \tilde{\eta}_{1t} + \Gamma \, \tilde{\eta}_{1x} = 0 \, , 
                              \label{linsys_plane_wave} \\
 && \tilde{\eta}_2 = \frac{1}{C^\ast} \Big(  \tilde{\eta}_{1t} + \frac{1}{2} \tilde{\Gamma} \, \tilde{\eta}_1 \Big)  \, ,
                 \label{teta2}  
\ee
where
\bez
               \tilde{\Gamma} := \Gamma + \mathrm{i} \beta \, I_n \, .
\eez
The system is solved by
\be
   \tilde{\eta}_1 =  \cosh(\Theta) \, V \, , \qquad
   \tilde{\eta}_2 = \frac{1}{2 C^\ast} \, \Big(  \cosh(\Theta) \, \tilde{\Gamma} - 2 R \, \sinh(\Theta) \Big) V  \, ,
                 \label{plane_wave_linsys_sol}
\ee
with a constant $n$-component column vector $V$ and 
\bez
        \Theta := \mathrm{i} \, \alpha \, x \, \Gamma^{-1} R - t \, R + K 
                      = \mathrm{i} \, \alpha \, x \, (\tilde{\Gamma} - \mathrm{i} \beta \, I_n)^{-1} R - t \, R + K \,  .
\eez       
Here $R$ is a matrix root of
\be
            R^2 = \frac{1}{4} \tilde{\Gamma}^2 - |C|^2 I_n \, ,    \label{R_eq}
\ee
which is assumed to be invertible, 
and $K$ is a constant $n \times n$ matrix that commutes with $\Gamma$ (and then also with $R$).
The solution of (\ref{Lyap}) is given by 
\bez
  \Omega &=& \cosh(\Theta) \, X \, \cosh(\Theta^\dagger) 
          + \frac{1}{4 |C|^2} \cosh(\Theta) \,  \tilde{\Gamma} X  
           \tilde{\Gamma}^\dagger \cosh(\Theta^\dagger) 
          + \frac{1}{|C|^2} \sinh(\Theta) R X R^\dagger \sinh(\Theta^\dagger)\\
  &&  - \frac{1}{2 |C|^2} \Big( \sinh(\Theta) R X  \tilde{\Gamma}^\dagger  \cosh(\Theta^\dagger)
        + \cosh(\Theta) \tilde{\Gamma} X R^\dagger \sinh(\Theta^\dagger) \Big)  \, ,
\eez
where $X$ is the solution of the Lyapunov equation 
\be
        \tilde{\Gamma} \, X + X \, \tilde{\Gamma}^\dagger  =  \Gamma \, X + X \, \Gamma^\dagger = V V^\dagger \, .
                            \label{pw_Lyap}
\ee
(\ref{BDT_new_solution_f}) yields
\bez
        f = C \, e^{\mathrm{i} \, (\alpha \, x - \beta \, t)} \, \Big( 1 - \frac{1}{|C|^2} V^\dagger \cosh(\Theta^\dagger) \,
             \Omega^{-1} \, \big( \cosh(\Theta) \, \tilde{\Gamma} - 2 R \, \sinh(\Theta) \big) V  \Big) \, .
\eez
Essentially, for specified data, it remains to compute the inverse of $\Omega$ in order to 
find a more explicit form of this solution. 

\subsubsection{Simple solitons on a plane wave background}
If $\Gamma = \mathrm{diag}(\gamma_1,\ldots,\gamma_n)$, where $\gamma_i \neq -\gamma_j$, $i,j=1,\ldots,n$,
then $X$ is given by the Cauchy-like matrix
\bez
        X = \left( \frac{v_i \, v_j^\ast}{\gamma_i + \gamma_j^\ast} \right)  
            = \left( \frac{v_i \, v_j^\ast}{\tilde{\gamma}_i + \tilde{\gamma}_j^\ast} \right)   \, .
\eez

\begin{example}
For $n=1$, we find 
\bez
      \Omega &=& \frac{|V|^2}{8 \mathrm{Re}(\gamma)} \Big( 4 \,  |\cosh(\Theta)|^2 
    + \frac{1}{|C|^2} | \tilde{\gamma} \, \cosh(\Theta) - 2 r \, \sinh(\Theta) |^2 \Big) \, ,
\eez
where now $\Theta = (\mathrm{i} \, \alpha \, x /(\tilde{\gamma}-\mathrm{i} \beta)   - t ) r + K$ with 
$r = \pm \sqrt{\frac{1}{4} \tilde{\gamma}^2 - |C|^2}$.
(\ref{BDT_new_solution_f}) yields the solution
\be
       f = C \, e^{ \mathrm{i} \, (\alpha \, x - \beta \, t) }  \left( 1
       - \frac{ \mathrm{Re}(\tilde{\gamma}) \, \cosh(\Theta^\ast) 
             \big( \tilde{\gamma} \cosh(\Theta) - 2 \, r \sinh(\Theta) \big)}
       {  |C|^2 \, |\cosh(\Theta)|^2  + \frac{1}{4}  | \tilde{\gamma} \, \cosh(\Theta) - 2 \, r \, \sinh(\Theta) |^2  } 
      \right) \, .            \label{AKM_breather}
\ee
Fig.~\ref{fig:plane_wave_1soliton} shows a plot of the absolute value of $f$ for specified data. 
We note that $f = - f_0$ if $\tilde{\gamma} = 2 \, |C|$ (and thus $r=0$). 
Comparison of (\ref{AKM_breather}) with the focusing NLS solution (3.11) in \cite{CMH17} shows 
that, if $\tilde{\gamma}$ is real, we have a counterpart of a single Akhmediev breather if $|\tilde{\gamma}|<2|C|$ 
and a Kuznetsov-Ma breather if $|\tilde{\gamma}|>2|C|$. We refer to the references in \cite{CMH17} for the original 
literature on the NLS breathers. 
\begin{figure}[h]
\begin{center}
\includegraphics[scale=.4]{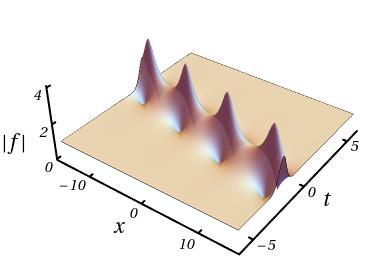} 
\hspace{2cm}
\includegraphics[scale=.4]{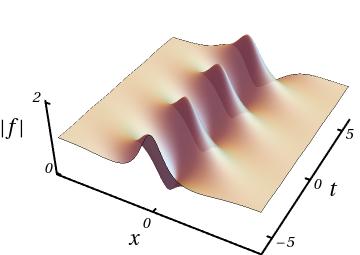} 
\parbox{15cm}{
\caption{Plots of the absolute value of $f$ for a single soliton on a plane wave background. Here we chose the data 
$\tilde{\gamma} =3, C=\alpha=\beta=v=1$ (real $r$, left plot), respectively 
$\tilde{\gamma} = 1, C=\alpha=\beta=v=1$ (imaginary $r$, right plot), and $K=0$.
 \label{fig:plane_wave_1soliton} } 
}
\end{center}
\end{figure}
\end{example}

For $n>1$, we obtain (nonlinear) superpositions of  the ``elementary solutions'', given in the preceding 
example.

\subsubsection{Solutions associated with non-diagonal Jordan matrices}

Following \cite{CMH17}, it is straightforward to elaborate the counterparts of the $n$-th order 
Akhmediev and Kuznetsov-Ma breathers. They are obtained by choosing $\tilde{\Gamma}$ to be 
an $n \times n$ Jordan matrix, i.e., (\ref{Gamma_Jordan_block}) with $k=n$ and $\gamma$ replaced by $\tilde{\gamma}$.\footnote{For the case of a $2 \times 2$ Jordan 
matrix $\Gamma$ and the NLS equation, also see \cite{Manas96}.}
The corresponding solution of the Lyapunov equation can be taken from Section~\ref{subsec:zero_seed_Jordan}.
A root of (\ref{R_eq}) is given by the Toeplitz matrix
\bez
   R = \frac{1}{2} \, \sqrt{\tilde{\gamma}^2 - 4 |C|^2} \, \left( \begin{array}{cccccc} 1 & 0 & \cdots & \cdots & \cdots & 0 \\ 
                                      \tilde{\gamma} \, (\tilde{\gamma}^2 - 4 |C|^2)^{-1}  &  1  & \ddots & \ddots & \ddots & \vdots \\
                                     - 2 \, (\tilde{\gamma}^2 - 4 |C|^2)^{-2} & \ddots  &  \ddots & \ddots & \ddots & \vdots \\
                              2 \, \tilde{\gamma} \, (\tilde{\gamma}^2- 4 |C|^2)^{-3} & \ddots & \ddots  &  \ddots &  \ddots & \vdots \\
                                 \vdots & \ddots & \ddots & \ddots & \ddots & 0 \\
                                 \vdots &  & \ddots & \ddots & \ddots & 1
                     \end{array} \right) \, ,
\eez
which commutes with the above Jordan matrix $\tilde{\Gamma}$, since this holds for any lower triangular 
$n \times n$ Toeplitz matrix. The entries in the first column of $R$ are the Taylor series coefficients of 
$\frac{1}{2} \, \sqrt{(\tilde{\gamma}+z)^2 - 4 |C|^2}$ at $z=0$.

The following result parallels that in Section~3.1.2 of \cite{CMH17}.

\begin{proposition}
\label{prop:regularity}
Let $\tilde{\Gamma}$ be an $n \times n$ (lower triangular) Jordan matrix with $\mathrm{Re}(\tilde{\gamma}) \neq 0$ and 
$\tilde{\gamma} \neq \pm 2 |C|$.   
Then the solution of (\ref{3rd-order_eq}), obtained from the solution (\ref{plane_wave_linsys_sol})
of the linear system, is regular if the first component of the vector $V$ 
is different from zero.  Furthermore, without restriction of generality one can set 
$V = (1,0,\ldots,0)^T$. \qed
\end{proposition}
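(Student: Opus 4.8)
The plan is to reduce the statement to the invertibility criterion for Jordan matrices established above, namely that for a lower triangular Jordan matrix $\Gamma$ with $\mathrm{Re}(\gamma)\neq 0$ the solution of (\ref{Lyap}) is invertible as soon as the first component of $\eta_1$ or of $\eta_2$ is different from zero. Since $\tilde{\Gamma} = \Gamma + \mathrm{i}\beta\,I_n$ with $\beta$ real, we have $\mathrm{Re}(\gamma) = \mathrm{Re}(\tilde{\gamma}) \neq 0$, so this criterion is available at every point $(x,t)$. Because the seed (\ref{plane_wave_seed}) is regular on $\mathbb{R}^2$ and, as noted after Theorem~\ref{thm:f_eq}, a solution generated via (\ref{BDT_new_solution_f}) can only be singular where $\det(\Omega)$ vanishes, it suffices to prove that at every $(x,t)\in\mathbb{R}^2$ at least one of the first components $(\eta_1)_1$, $(\eta_2)_1$ is nonzero.

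First I would exploit the lower triangular Toeplitz structure. The matrices $\Gamma = \gamma\,I_n + N$ (with $N$ the nilpotent subdiagonal shift), $\Gamma^{-1}$, the Toeplitz root $R$ of (\ref{R_eq}), and $K$ (which commutes with $\Gamma$) all lie in the commutative algebra of lower triangular Toeplitz matrices; hence so does $\Theta = \mathrm{i}\,\alpha\,x\,\Gamma^{-1}R - t\,R + K$ and, therefore, $\cosh(\Theta)$ and $\sinh(\Theta)$. For any lower triangular matrix $M$ one has $(M V)_1 = M_{11} V_1$, so from (\ref{plane_wave_linsys_sol}) and (\ref{eta->teta}), up to nonvanishing scalar exponential prefactors,
\bez
 (\eta_1)_1 &\propto& \cosh(\theta)\,V_1 \, , \\
 (\eta_2)_1 &\propto& \big( \tilde{\gamma}\,\cosh(\theta) - 2\,r\,\sinh(\theta) \big) \, V_1 \, ,
\eez
where $\theta$ and $r$ denote the common diagonal entries of $\Theta$ and $R$, respectively.

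The key step is to show these cannot vanish simultaneously when $V_1\neq 0$. If $\cosh(\theta)\neq 0$, then $(\eta_1)_1\neq 0$. If $\cosh(\theta)=0$, then $\sinh^2(\theta)=\cosh^2(\theta)-1=-1$, so $\sinh(\theta)=\pm\mathrm{i}\neq 0$ and $(\eta_2)_1\propto \mp\mathrm{i}\,r\,V_1$. Here $r=\tfrac12\sqrt{\tilde{\gamma}^2-4|C|^2}\neq 0$ precisely because $\tilde{\gamma}\neq\pm 2|C|$ (the quantity $4|C|^2$ being a nonnegative real, $\tilde{\gamma}^2=4|C|^2$ would force $\tilde{\gamma}=\pm 2|C|$), so $(\eta_2)_1\neq 0$ at such points. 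In either case one of the first components survives, whence $\Omega$ is invertible everywhere and $f$ is regular. I expect this non-vanishing dichotomy to be the real content of the argument: recognizing that the hypothesis $\tilde{\gamma}\neq\pm 2|C|$ is exactly what keeps $(\eta_2)_1$ alive on the (codimension-one) locus where $(\eta_1)_1$ vanishes.

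Finally, for the normalization, I would write a general $V$ with $V_1\neq 0$ as $V = T_V\,e_1$, where $e_1=(1,0,\ldots,0)^T$ and $T_V$ is the invertible lower triangular Toeplitz matrix with first column $V$. Since $T_V$ commutes with $\Gamma$, it commutes with $\cosh(\Theta)$, $\sinh(\Theta)$, $\tilde{\Gamma}$ and $R$, so $\eta_i = T_V\,\hat{\eta}_i$, where $\hat{\eta}_i$ are the vectors built from $V=e_1$. The right hand side of (\ref{Lyap}) then becomes $T_V\big(\hat{\eta}_1\hat{\eta}_1^\dagger + \hat{\eta}_2\hat{\eta}_2^\dagger\big)T_V^\dagger$, and by uniqueness of the Lyapunov solution $\Omega = T_V\,\hat{\Omega}\,T_V^\dagger$; substituting into (\ref{BDT_new_solution_f}) the factors $T_V$ cancel and $f$ coincides with the solution obtained from $V=e_1$. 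This justifies setting $V=(1,0,\ldots,0)^T$ without restriction of generality.
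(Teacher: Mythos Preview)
The paper does not actually prove this proposition; it states the result with a \qed\ and merely remarks that it ``parallels that in Section~3.1.2 of \cite{CMH17}''. Your argument supplies precisely the kind of direct proof one would expect in that spirit, and it is correct. The reduction to the invertibility criterion for Jordan--Lyapunov matrices via the first components of $\eta_1,\eta_2$, together with the $\cosh(\theta)/\sinh(\theta)$ dichotomy and the identification of the hypothesis $\tilde{\gamma}\neq\pm 2|C|$ with $r\neq 0$, is exactly the intended mechanism; likewise, the Toeplitz conjugation $V=T_V e_1$ and the resulting cancellation in $\eta_1^\dagger\Omega^{-1}\eta_2$ is the standard way to justify the normalization. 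So your proof is correct and is essentially the argument the paper gestures at by citing \cite{CMH17}.
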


\subsubsection{A degenerate case}
\label{subsec:degen}
The previously found solution of the linear system is only valid if the matrix $\frac{1}{4} \tilde{\Gamma}^2 - |C|^2 I_n$ 
is invertible. Now we drop this assumption. 

\begin{example}
\label{ex:Peregrine}
Let $n=1$ and $\tilde{\gamma}=2 |C|$. Then we have 
\bez
          \tilde{\eta}_{1tt}=0 \, , \qquad 
          \tilde{\eta}_1 = c_0 + c_1 \, \big( \frac{ \alpha \, x}{2 \mathrm{i} \, |C|+\beta}  + t \big)  \, , \qquad
          \tilde{\eta}_2 = \frac{C}{|C|} \, \tilde{\eta}_1 + \frac{c_1}{C^\ast}   \, ,
\eez
with complex constants $c_0,c_1$. The corresponding solution of (\ref{Lyap}) is
\bez
    \Omega &=& \frac{1}{4 |C|} ( |\tilde{\eta}_1|^2 + |\tilde{\eta}_2|^2) 
                          = \frac{1}{4 |C|} \Big( 2 \, \Big| \tilde{\eta}_1+ \frac{c_1}{2 |C|} \Big|^2 + \frac{|c_1|^2}{2 |C|^2} \Big) \\
&=& \frac{1}{2 |C|} \Big(  \Big| c_0 + c_1 \big( \frac{\alpha \, x}{2 \mathrm{i} \, |C| + \beta} + t + \frac{1}{2 |C|} \big)  \Big|^2  
         + \frac{|c_1|^2}{4 |C|^2} \Big) \, ,
\eez
and we obtain
\bez
 f = C \, e^{ \mathrm{i} \, (\alpha \, x - \beta \, t) } \Big[ 1
    - \frac{1}{|C| \, \Omega}  \Big(  \Big|c_0 + c_1 \, \big( \frac{ \alpha \, x}{2 \mathrm{i} \, |C|+\beta}  + t \big) \Big|^2 
    + \frac{c_1}{|C|} \big[ c_0^\ast + c_1^\ast \, \big( \frac{ \alpha \, x}{-2 \mathrm{i} \, |C|+\beta}  + t \big) \big] \Big) 
           \Big] \, .
\eez
This quasi-rational solution is the counterpart of the Peregrine breather solution of the focusing NLS equation, 
which models a rogue wave. 
Also see Fig.~\ref{fig:Peregrine}.
\begin{figure}[h]
\begin{center}
\includegraphics[scale=.4]{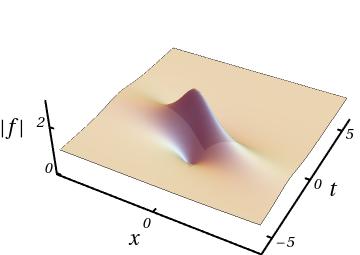} 
\parbox{15cm}{
\caption{Plot of the absolute value of the solution in Example~\ref{ex:Peregrine}. Here we chose
the data $\alpha=\beta=C=c_0=c_1=1$.
\label{fig:Peregrine} } 
}
\end{center}
\end{figure}
\end{example}

Counterparts of higher order Peregrine breathers are obtained if $\tilde{\Gamma}$ is a Jordan matrix with eigenvalue 
$\tilde{\gamma} = 2 |C|$.\footnote{The second order Peregrine breather solution of the NLS equation 
appeared first in \cite{Manas96} to our knowledge, see Section 4.1.2 therein.} 
We follow the steps in Section~3.2 of \cite{CMH17} and omit the analogous proofs.

\begin{proposition}
\label{prop:pw_linsys_sol}
Let 
\bez
           \mathcal{N}  := \frac{1}{4} \tilde{\Gamma}^2 - |C|^2 I_n
\eez 
be nilpotent of degree $N>0$. (\ref{linsys_plane_wave}) is then solved by
\bez
        \tilde{\eta}_1 &=& \big( R_1(\mathcal{N},t) \, R_1(-\alpha^2 \Gamma^{-2} \mathcal{N},x)
           - \mathrm{i}  \,  \alpha \, \Gamma^{-1}  \mathcal{N} \, R_2(\mathcal{N},t) \, R_2(-\alpha^2 \Gamma^{-2} \mathcal{N},x)\big) \, v \\  
   && + \big( R_1(\mathcal{N},t) \,  R_2(-\alpha^2 \Gamma^{-2} \mathcal{N},x) 
          + \frac{\mathrm{i}}{\alpha} \Gamma \, R_2(\mathcal{N},t) \,  R_1(-\alpha^2 \Gamma^{-2} \mathcal{N},x) \big) \, w \, , 
\eez
with constant $n$-component vectors $v,w$, and 
\bez
     R_1(\mathcal{N},t) := \sum_{k=0}^{N-1} \frac{t^{2k}}{(2k)!} \, \mathcal{N}^k \, , \qquad
     R_2(\mathcal{N},t) := \sum_{k=0}^{N-1} \frac{t^{2k+1}}{(2k+1)!} \, \mathcal{N}^k \, .
\eez
$\tilde{\eta}_2$ is given by (\ref{teta2}).   \qed
\end{proposition}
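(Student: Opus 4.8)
The plan is to verify directly that the vector $\tilde\eta_1$ displayed in the proposition satisfies both equations of (\ref{linsys_plane_wave}); once this is done, $\tilde\eta_2$ is obtained merely by substituting into the definition (\ref{teta2}), so nothing further needs to be checked there. The whole argument rests on the observation that every matrix occurring here is a polynomial in $\Gamma$ and $\tilde\Gamma = \Gamma + \mathrm{i}\beta\,I_n$, so that all of them --- in particular $\Gamma^{-1}$, $\mathcal{N}$, and the operators $R_1(\mathcal{N},t)$, $R_2(\mathcal{N},t)$, $R_1(\mathcal{M},x)$, $R_2(\mathcal{M},x)$, where $\mathcal{M}:=-\alpha^2\Gamma^{-2}\mathcal{N}$ --- mutually commute. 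I would also note at the outset that $\mathcal{M}$ is nilpotent of the same degree $N$ as $\mathcal{N}$, because $\Gamma$ is invertible and commutes with $\mathcal{N}$, so that $\mathcal{M}^k=(-\alpha^2)^k\Gamma^{-2k}\mathcal{N}^k$ vanishes exactly when $\mathcal{N}^k$ does; hence the finite sums defining $R_1(\mathcal{M},x)$ and $R_2(\mathcal{M},x)$ are well posed.

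Next I would record the elementary differentiation rules for $R_1$ and $R_2$. Differentiating the defining series term by term gives $\partial_t R_2(\mathcal{N},t)=R_1(\mathcal{N},t)$ and $\partial_t R_1(\mathcal{N},t)=\mathcal{N}\,R_2(\mathcal{N},t)$; the latter is precisely where nilpotency enters, since the top-order term that the truncation at $k=N-1$ would otherwise leave unmatched carries a factor $\mathcal{N}^N=0$ and therefore costs nothing. Consequently $\partial_t^2 R_i=\mathcal{N}\,R_i$ for $i=1,2$, and the same relations hold in the variable $x$ with $\mathcal{N}$ replaced by $\mathcal{M}$, namely $\partial_x R_2(\mathcal{M},x)=R_1(\mathcal{M},x)$ and $\partial_x R_1(\mathcal{M},x)=\mathcal{M}\,R_2(\mathcal{M},x)$. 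With these in hand the first equation $\tilde\eta_{1tt}=\mathcal{N}\tilde\eta_1$ is immediate: each of the four terms building $\tilde\eta_1$ is a commuting product of a $t$-operator ($R_1(\mathcal{N},t)$ or $R_2(\mathcal{N},t)$) with an $x$-operator, and applying $\partial_t^2$ simply reproduces the same term with one extra factor $\mathcal{N}$.

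The main (and essentially only) nontrivial step is the mixed first-order equation $\mathrm{i}\alpha\,\tilde\eta_{1t}+\Gamma\,\tilde\eta_{1x}=0$. Abbreviating $a:=R_1(\mathcal{M},x)$ and $b:=R_2(\mathcal{M},x)$, I would compute $\tilde\eta_{1t}$ and $\tilde\eta_{1x}$ using the rules above together with $a_x=\mathcal{M}b=-\alpha^2\Gamma^{-2}\mathcal{N}\,b$, and then form the stated combination. This is the heart of the calculation: the precise coefficients $-\mathrm{i}\alpha\,\Gamma^{-1}\mathcal{N}$ and $\tfrac{\mathrm{i}}{\alpha}\Gamma$ in the ansatz, together with the choice $\mathcal{M}=-\alpha^2\Gamma^{-2}\mathcal{N}$, are exactly what is needed to make the contributions cancel in pairs, so that both the coefficient of $v$ and the coefficient of $w$ vanish identically. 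I expect the cancellations to be clean, the only risk being bookkeeping errors in the factors of $\Gamma^{\pm1}$, $\alpha$, and $\mathrm{i}$.

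The conceptual point that explains the shape of the ansatz --- and which I would mention to motivate the formula rather than treat as an obstacle --- is that differentiating the second equation of (\ref{linsys_plane_wave}) once more and inserting the first shows $\tilde\eta_{1xx}=\mathcal{M}\,\tilde\eta_1$, which forces the $x$-dependence to run through $R_1(\mathcal{M},x)$ and $R_2(\mathcal{M},x)$. The displayed $\tilde\eta_1$ is then the separated solution compatible with both second-order equations and with the first-order constraint, and the verification above confirms it. Since the analogous steps in Section~3.2 of \cite{CMH17} are omitted here as well, I would present only the commutativity remark, the two differentiation rules, and the final pairwise cancellation, leaving the routine expansion to the reader.
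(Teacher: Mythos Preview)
Your proposal is correct and follows exactly the route the paper has in mind: the paper gives no proof of its own here but defers to Section~3.2 of \cite{CMH17}, where the argument is precisely the direct verification you outline --- commutativity of all polynomials in $\Gamma$, the two differentiation rules $\partial_t R_1=\mathcal{N}R_2$, $\partial_t R_2=R_1$ (and their $x$-analogues with $\mathcal{M}=-\alpha^2\Gamma^{-2}\mathcal{N}$), and the pairwise cancellation in $\mathrm{i}\alpha\,\tilde\eta_{1t}+\Gamma\,\tilde\eta_{1x}$. Your bookkeeping of the cancellations is accurate, and your remark that $\tilde\eta_{1xx}=\mathcal{M}\tilde\eta_1$ motivates the form of the ansatz is a nice addition.
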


Again, the corresponding solution of the Lyapunov equation is obtained via the results recalled in  Section~\ref{subsec:zero_seed_Jordan}.

\begin{proposition}
Let $\tilde{\Gamma}$ be a (lower triangular) Jordan matrix with eigenvalue 
$\tilde{\gamma} = 2 |C|$. Let $\eta_1,\eta_2$ be the solution of the linear system,  given by Proposition~\ref{prop:pw_linsys_sol} and (\ref{eta->teta}). The solution of  (\ref {3rd-order_eq}),  
obtained via (\ref{BDT_new_solution_f}), is then regular if the first component of $v$ is different from zero. 
Moreover, without restriction of generality, we can set $v=(1,0,\ldots,0)^T$. \hfill   $\qed$
\end{proposition}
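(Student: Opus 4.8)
The plan is to follow the pattern already used for Proposition~\ref{prop:regularity} and in Section~3.2 of \cite{CMH17}, reducing regularity of $f$ to invertibility of $\Omega$ on all of $\mathbb{R}^2$. Since the seed $f_0$ in (\ref{plane_wave_seed}) is the everywhere regular plane wave, formula (\ref{BDT_new_solution_f}) shows that $f = f_0 - \eta_1^\dagger \Omega^{-1}\eta_2$ can only be singular where $\det(\Omega)=0$. Now $\Gamma = \tilde{\Gamma} - \mathrm{i}\beta I_n$ is again a single lower triangular Jordan block, with eigenvalue $\gamma = \tilde{\gamma}-\mathrm{i}\beta$ and $\mathrm{Re}(\gamma) = \mathrm{Re}(\tilde{\gamma}) = 2|C| \neq 0$. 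First I would invoke the earlier invertibility criterion for (\ref{Lyap}) in the case of a Jordan block with $\mathrm{Re}(\gamma)\neq 0$, which guarantees that $\Omega$ is invertible at any point at which the first component of $\eta_1$ or of $\eta_2$ is nonzero. Because the scalar prefactors in (\ref{eta->teta}) never vanish, it then suffices to show that the first components of $\tilde{\eta}_1$ and $\tilde{\eta}_2$ do not vanish simultaneously anywhere.

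The key structural observation is that $\mathcal{N} = \tfrac14 \tilde{\Gamma}^2 - |C|^2 I_n$ is \emph{strictly} lower triangular, because the choice $\tilde{\gamma}=2|C|$ exactly cancels the diagonal of $\tfrac14\tilde{\Gamma}^2$; hence $\mathcal{N}$ is nilpotent, and $\Gamma$, $\Gamma^{-1}$, $\mathcal{N}$ and the polynomials $R_1$, $R_2$ appearing in Proposition~\ref{prop:pw_linsys_sol} all lie in the commutative algebra of lower triangular Toeplitz matrices. The first row of any such matrix equals its diagonal entry times $(1,0,\ldots,0)$, and diagonals multiply under products. Reading off first rows in Proposition~\ref{prop:pw_linsys_sol} then yields $(\tilde{\eta}_1)_1 = v_1 + (x + \mathrm{i}\gamma\, t/\alpha)\, w_1$, where $v_1,w_1$ are the first entries of the constant vectors $v,w$ and $\alpha\neq 0$ as required for a nontrivial seed. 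Substituting into (\ref{teta2}) and using that $\tilde{\Gamma}$ has first row $(\tilde{\gamma},0,\ldots,0)$ with $\tilde{\gamma}=2|C|$ gives $(\tilde{\eta}_2)_1 = (C^\ast)^{-1}\big(\mathrm{i}\gamma\, w_1/\alpha + |C|\,(\tilde{\eta}_1)_1\big)$. For $n=1$ these reproduce the constants in Example~\ref{ex:Peregrine}, which is a reassuring check.

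With these two formulas the conclusion is immediate. If $w_1=0$ then $(\tilde{\eta}_1)_1 = v_1 \neq 0$ everywhere; if $w_1\neq 0$, then at any point where $(\tilde{\eta}_1)_1=0$ one has $(\tilde{\eta}_2)_1 = \mathrm{i}\gamma\, w_1/(\alpha C^\ast)\neq 0$, since $\gamma\neq 0$. Thus, whenever $v_1\neq 0$, at least one of $(\eta_1)_1$, $(\eta_2)_1$ is nonzero at every $(x,t)$, so $\Omega$ is invertible throughout $\mathbb{R}^2$ and $f$ is regular. I expect the only genuine work, and hence the main obstacle, to be the bookkeeping of this paragraph and the previous one: one has to track the triangular Toeplitz structure carefully enough to be sure that the $w$-contribution to $(\tilde{\eta}_1)_1$ is exactly $(x+\mathrm{i}\gamma\, t/\alpha)\,w_1$ and that its vanishing is always compensated by $(\tilde{\eta}_2)_1$.

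For the final normalization I would use the gauge freedom $\eta_i\mapsto T\eta_i$, $\Omega\mapsto T\Omega T^\dagger$, with $T$ an arbitrary invertible lower triangular Toeplitz matrix. Such a $T$ commutes with $\Gamma$ and with all operators in Proposition~\ref{prop:pw_linsys_sol}, so it maps a solution of the linear system with constants $(v,w)$ to the one with $(Tv,Tw)$, sends a solution of (\ref{Lyap}) to $T\Omega T^\dagger$, and leaves $f = f_0 - \eta_1^\dagger\Omega^{-1}\eta_2$ invariant. Identifying $v$ with the first column of a lower triangular Toeplitz matrix $\hat{v}$, which is invertible precisely when $v_1\neq 0$, the choice $T=\hat{v}^{-1}$ gives $Tv=(1,0,\ldots,0)^T$, which establishes that one may set $v=(1,0,\ldots,0)^T$ without restriction of generality.
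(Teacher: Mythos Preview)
Your proposal is correct and follows precisely the approach the paper indicates: the paper omits the proof entirely, pointing to Section~3.2 of \cite{CMH17}, and your argument reproduces that pattern---reducing regularity to the pointwise invertibility criterion for the Lyapunov equation with a Jordan block, then using the lower-triangular Toeplitz structure to read off the first components of $\tilde{\eta}_1,\tilde{\eta}_2$ and check they cannot vanish simultaneously, and finally exploiting the Toeplitz gauge freedom to normalize $v$. Your explicit formulas $(\tilde{\eta}_1)_1 = v_1 + (x+\mathrm{i}\gamma t/\alpha)\,w_1$ and $(\tilde{\eta}_2)_1 = (C^\ast)^{-1}\bigl(\mathrm{i}\gamma w_1/\alpha + |C|\,(\tilde{\eta}_1)_1\bigr)$ are correct and consistently specialize to Example~\ref{ex:Peregrine}, so the bookkeeping you flag as the main obstacle is sound.
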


\section{Derivation of the binary Darboux transformation}
\label{sec:bDT}

\subsection{Binary Darboux transformations in bidifferential calculus}
\label{subsec:bDT_in_bdc}
A \emph{graded associative algebra} is an associative algebra $\boldsymbol{\Omega} = \bigoplus_{r \geq 0} \boldsymbol{\Omega}^r$
over a field $\mathbb{K}$ of characteristic zero, where $\mathcal{A} := \boldsymbol{\Omega}^0$ is an associative 
algebra over $\mathbb{K}$ and $\boldsymbol{\Omega}^r$, $r \geq 1$, are $\mathcal{A}$-bimodules such that 
$\boldsymbol{\Omega}^r \, \boldsymbol{\Omega}^s \subseteq \boldsymbol{\Omega}^{r+s}$. Elements of $\boldsymbol{\Omega}^r$ 
will be called $r$-forms.
A \emph{bidifferential calculus} is a unital graded associative algebra $\boldsymbol{\Omega}$, supplied
with two $\mathbb{K}$-linear graded derivations 
$\d, \bd : \boldsymbol{\Omega} \rightarrow \boldsymbol{\Omega}$
of degree one (hence $\d \boldsymbol{\Omega}^r \subseteq \boldsymbol{\Omega}^{r+1}$,
$\bd \boldsymbol{\Omega}^r \subseteq \boldsymbol{\Omega}^{r+1}$), and such that
\begin{eqnarray}
    \d^2 = \bd^2 = \d \bd + \bd \d = 0  \, .   
    \label{bidiff_conds}
\end{eqnarray}
We refer the reader to \cite{DMH20dc} for an introduction to this structure and an extensive list of references. 

\begin{theorem}
\label{thm:A}
Given a bidifferential calculus, let 0-forms $\Delta, \Gamma$ and 1-forms  $\kappa,\lambda$ satisfy  
\be
  &&  \bd \Delta + [\lambda , \Delta] = (\d \Delta) \, \Delta \, , \qquad  
      \bd \lambda + \lambda^2 = (\d \lambda) \, \Delta \, , \nonumber \\ 
  &&  \bd \Gamma - [\kappa , \Gamma] = \Gamma \, \d \Gamma  \, , \qquad  
      \bd \kappa - \kappa^2 = \Gamma \, \d \kappa \, .   \label{Delta,lambda,Gamma,kappa_eqs}
\ee
Let 0-forms $\theta$ and $\eta$ be solutions of the linear equations 
\be
    \bd \theta = A \, \theta + (\d \theta) \, \Delta + \theta \, \lambda \, , \qquad
   \bd \eta = - \eta \, A + \Gamma \, \d \eta + \kappa \, \eta \, ,   \label{linsys1}
\ee
where the 1-form $A$ satisfies\footnote{Under suitable assumptions for $\Delta$ and $\Gamma$, 
these equations arise as integrability conditions of the linear system and ``adjoint linear system'' given in 
(\ref{linsys1}), by use of (\ref{Delta,lambda,Gamma,kappa_eqs}). In any case, the integrability conditions 
are satisfied if (\ref{Delta,lambda,Gamma,kappa_eqs}) and (\ref{A_eqs}) hold.}
  \be
         \d A = 0 \, , \qquad \bd A = A^2 \, .   \label{A_eqs}
\ee
Furthermore, let $\Omega$ be an invertible solution of the linear system\footnote{The equation obtained by 
acting with $\bd$ on (\ref{linsys2}) is satisfied as a consequence of the preceding equations.}
\be
    && \Gamma \, \Omega - \Omega \, \Delta = \eta \, \theta \, ,  \label{genSylv} \\
    && \bd \Omega = (\d \Omega) \, \Delta - (\d \Gamma) \, \Omega + \kappa \, \Omega + \Omega \, \lambda + (\d \eta) \, \theta \, .
        \label{linsys2}
\ee
Then  
\be
        A' := A - \d (\theta \, \Omega^{-1} \eta)      \label{A'}
\ee
also solves (\ref{A_eqs}). 
\end{theorem}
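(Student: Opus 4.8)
The plan is to verify the two equations of (\ref{A_eqs}) separately for $A'$, as defined in (\ref{A'}). The first, $\d A' = 0$, is immediate: since $A' = A - \d(\theta\,\Omega^{-1}\eta)$ and $A$ already satisfies $\d A = 0$, the correction term is $\d$-exact and is annihilated by $\d^2 = 0$ from (\ref{bidiff_conds}), so $\d A' = \d A - \d^2(\theta\,\Omega^{-1}\eta) = 0$. All the substance lies in the curvature equation $\bd A' = (A')^2$.

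Writing $\phi := \theta\,\Omega^{-1}\eta$, a $0$-form, so that $A' = A - \d\phi$, I use $\bd A = A^2$ to get $\bd A' = A^2 - \bd\,\d\phi$, while $(A')^2 = A^2 - A\,\d\phi - (\d\phi)\,A + (\d\phi)^2$, where grading must be respected since $A$ and $\d\phi$ are $1$-forms and do not commute. Hence $\bd A' = (A')^2$ is equivalent to the single Riccati-type identity
\be
   \bd\,\d\phi &=& A\,\d\phi + (\d\phi)\,A - (\d\phi)^2 \, . \label{star}
\ee

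To establish (\ref{star}) I would first expand $\d\phi$ by the graded Leibniz rule, using $\d\Omega^{-1} = -\Omega^{-1}(\d\Omega)\Omega^{-1}$, which yields three terms, and then apply $\bd$ once more. At this stage I substitute: the linear equations (\ref{linsys1}) for $\bd\theta$ and $\bd\eta$; equation (\ref{linsys2}) for $\bd\Omega$ (together with $\bd\Omega^{-1} = -\Omega^{-1}(\bd\Omega)\Omega^{-1}$); and the structural relations (\ref{Delta,lambda,Gamma,kappa_eqs}) wherever $\bd\Delta,\bd\Gamma,\bd\lambda,\bd\kappa$ surface. The linchpin is the algebraic equation (\ref{genSylv}), used in its sandwiched form $\Omega^{-1}\eta\,\theta\,\Omega^{-1} = \Omega^{-1}\Gamma - \Delta\,\Omega^{-1}$, which trades the ``spectral'' $0$-forms $\Gamma,\Delta$ for the self-contained data. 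I expect the $A$-linear contributions, arising from the $A\theta$ and $-\eta A$ pieces of (\ref{linsys1}) (and where $\d A = 0$ is needed to discard $(\d A)\theta$-type terms), to reassemble exactly into $A\,\d\phi + (\d\phi)\,A$, while the remaining terms collapse to $-(\d\phi)^2$. The latter matching is transparent once one notices that expanding $(\d\phi)^2$ produces cross terms of the form $(\cdots)\Omega^{-1}(\eta\,\theta)\Omega^{-1}(\cdots)$, into which (\ref{genSylv}) inserts the very same $\Gamma,\Delta$ combinations that the left-hand side of (\ref{star}) generates.

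The main obstacle is purely organizational: controlling the noncommutative, $\mathbb{Z}$-graded bookkeeping. Three kinds of terms must be tracked and cancelled, namely (i) those carrying a bare auxiliary form $\Delta,\Gamma,\kappa,\lambda$ or one of their $\d$-derivatives, (ii) the higher-inverse-power terms $\theta\,\Omega^{-1}(\cdots)\Omega^{-1}(\cdots)\Omega^{-1}\eta$ produced by differentiating $\Omega^{-1}$ repeatedly and by the $\bd\Omega$ of (\ref{linsys2}), and (iii) the genuinely nonlinear pieces. The structural equations (\ref{Delta,lambda,Gamma,kappa_eqs}), equation (\ref{linsys2}), and above all the Sylvester equation (\ref{genSylv}) must conspire so that categories (i) and (ii) vanish identically and category (iii) reproduces precisely $-(\d\phi)^2$. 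Confirming that no stray term survives, with every sign from the graded Leibniz rule correctly placed, is the delicate and lengthy part of the verification; the conceptual content, by contrast, is entirely carried by the sandwiched form of (\ref{genSylv}).
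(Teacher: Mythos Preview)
Your reduction of $\bd A' = (A')^2$ to the identity $\bd\,\d\phi = A\,\d\phi + (\d\phi)\,A - (\d\phi)^2$ (with $\phi := \theta\,\Omega^{-1}\eta$) and the ingredients you invoke are exactly those of the paper's proof. The paper shortens the bookkeeping you anticipate by first computing $\bd\phi$ directly---since the hypotheses (\ref{linsys1}) and (\ref{linsys2}) are all $\bd$-equations---and, after eliminating $\Gamma$ via (\ref{genSylv}), obtains the closed form $\bd\phi = A\phi - \phi A + \d(\theta\Delta\,\Omega^{-1}\eta) + \phi\,\d\phi$, whose $\d$-image (using $\d A=0$ and $\d^2=0$) gives the identity in one line.
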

\begin{proof}
Clearly, we have $\d A' =0$. Using (\ref{A_eqs}), we obtain
\bez
    \bd A' - A'^2 = \d \bd (\theta \, \Omega^{-1} \eta) + A \, \d (\theta \, \Omega^{-1} \eta) 
    + \theta \, \Omega^{-1} \eta \, A - \d (\theta \, \Omega^{-1} \eta) \, \d (\theta \, \Omega^{-1} \eta) \, .
\eez
With the help of the linear equations (\ref{linsys1}) and (\ref{linsys2}), we find
\bez
      \bd (\theta \, \Omega^{-1} \eta)  &=& A  \, \theta \, \Omega^{-1} \eta - \theta \, \Omega^{-1} \eta \, A 
    + (\d \theta) \, \Delta \Omega^{-1} \eta + \theta \, \Omega^{-1} \Gamma \, \d \eta \\
    && - \theta \, \Omega^{-1} (\d \Omega) \, \Delta \, \Omega^{-1} \eta 
    + \theta \, \Omega^{-1} (\d \Gamma) \, \eta - \theta \, \Omega^{-1} (\d \eta) \, \theta \, \Omega^{-1} \eta \, .
\eez
Eliminating $\Gamma$ using (\ref{genSylv}), it becomes
\bez
        \bd (\theta \, \Omega^{-1} \eta)  = A  \, \theta \, \Omega^{-1} \eta - \theta \, \Omega^{-1} \eta \, A
       + \d (\theta \Delta \, \Omega^{-1} \eta) + \theta \, \Omega^{-1} \eta \, \d (\theta \, \Omega^{-1} \eta) \, .
\eez
Inserting this in our first equation leads to $\bd A' - A'^2 = 0$.
\end{proof}

The preceding theorem, and also the result stated next, remain true if the ingredients are \emph{matrices} of forms with 
dimensions chosen in such a way that the required products are all defined. Furthermore, it will be sufficient to have the 
maps $\d$ and $\bd$ defined on those matrices that appear in the theorem, but not necessarily on the whole of 
$\boldsymbol{\Omega}$.

\begin{corollary}
\label{cor:phi}
Let (\ref{Delta,lambda,Gamma,kappa_eqs}) hold and (\ref{linsys1}) with $A=\d \phi$, where the 0-form $\phi$ 
is a solution of
\be
         \bd \d \phi = \d \phi \, \d \phi \, .     \label{phi_eq}
\ee
If $\Omega$ is an invertible solution of (\ref{genSylv}) and (\ref{linsys2}), then  
\be
        \phi' = \phi - \theta \, \Omega^{-1} \eta + K \, ,
\ee
where $K$ is any $\d$-constant (i.e., $\d K=0$), solves the same equation.
\end{corollary}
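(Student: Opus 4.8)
The plan is to deduce Corollary~\ref{cor:phi} directly from Theorem~\ref{thm:A} by specializing the 1-form $A$ to the exact form $\d\phi$ and then reinterpreting the conclusion. The only real work is to verify that this choice of $A$ meets the hypotheses (\ref{A_eqs}) of the theorem, and then to recognize that the resulting $A'$ is again exact, with potential $\phi'$.

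First I would check that $A=\d\phi$ satisfies (\ref{A_eqs}). The first condition is immediate, $\d A=\d^2\phi=0$, since $\d^2=0$. The second condition is exactly the content of (\ref{phi_eq}): $\bd A=\bd\d\phi=\d\phi\,\d\phi=A^2$. Thus, with this $A$, and with $\Delta,\Gamma,\kappa,\lambda$ satisfying (\ref{Delta,lambda,Gamma,kappa_eqs}), with $\theta,\eta$ solving the linear system (\ref{linsys1}), and with $\Omega$ an invertible solution of (\ref{genSylv}) and (\ref{linsys2})---all of which are precisely the standing assumptions of the corollary---Theorem~\ref{thm:A} applies verbatim. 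It yields that $A'=A-\d(\theta\,\Omega^{-1}\eta)$ again solves (\ref{A_eqs}), i.e. $\d A'=0$ and $\bd A'=A'^2$.

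The final step is to identify $A'$ with $\d\phi'$. Writing $A'=\d\phi-\d(\theta\,\Omega^{-1}\eta)=\d(\phi-\theta\,\Omega^{-1}\eta)$ exhibits $A'$ as an exact form. Since $K$ is $\d$-constant, $\d K=0$, and therefore
\[
   \d\phi' = \d\phi - \d(\theta\,\Omega^{-1}\eta) + \d K = A' \, .
\]
Consequently the two equations (\ref{A_eqs}) for $A'$ translate into statements about $\phi'$: the first, $\d A'=\d\d\phi'=0$, is automatic, while the second, $\bd A'=A'^2$, reads $\bd\d\phi'=\d\phi'\,\d\phi'$, which is exactly (\ref{phi_eq}) for $\phi'$. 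Hence $\phi'$ solves the same equation.

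I do not anticipate a genuine obstacle here, since the corollary is essentially a gauge-potential rephrasing of the theorem. The one point deserving emphasis is the role of $K$: its $\d$-closedness is precisely what guarantees $\d\phi'=A'$, so that adding any $\d$-constant leaves $\d\phi'$---and therefore the validity of (\ref{phi_eq})---unchanged. This reflects the fact that $\phi$ enters (\ref{phi_eq}) only through $\d\phi$, so the potential is determined only up to a $\d$-constant.
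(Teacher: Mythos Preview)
Your proposal is correct and follows exactly the route the paper intends: the corollary is presented there as a reduction of Theorem~\ref{thm:A}, obtained by specializing $A=\d\phi$, and you have filled in precisely those details. There is nothing to add.
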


The result in Corollary \ref{cor:phi} can be regarded as a reduction  
of that in Theorem \ref{thm:A}. Corollary~\ref{cor:phi} has been used in many previous applications of 
bidifferential calculus, see in particular \cite{DMH20dc,DMH13SIGMA,CDMH16}. 
Here we provided short proofs of the above general results. Below, we will use Corollary \ref{cor:phi} to deduce
Theorem~\ref{thm:f_eq}.

\subsection{An application}
Let $\mathcal{A}$ be a unital associative algebra over $\mathbb{C}$, where the elements are allowed to depend 
on real variables $x,t$.  Let $\mathrm{Mat}(\mathcal{A})$ be the algebra of all matrices over $\mathcal{A}$, 
where the product of two matrices is defined to be zero whenever their dimensions do not fit. 
We choose
\begin{eqnarray}
    \boldsymbol{\Omega} = \mathrm{Mat}(\mathcal{A}) \otimes \bigwedge \mathbb{C}^2 \, ,   
                                                                   \label{Omega_wedge}
\end{eqnarray}
where $\bigwedge \mathbb{C}^2$ is the exterior algebra of the vector space $\mathbb{C}^2$. 
It is then sufficient to define $\d$ and $\bd$ on $\mathrm{Mat}(\mathcal{A})$, since they extend 
in an evident way to $\boldsymbol{\Omega}$, treating elements of $\bigwedge \mathbb{C}^2$ as $\d$- and $\bd$-constants. 

Let $\xi_1,\xi_2$ be a basis of $\bigwedge^1 \mathbb{C}^2$. 
For each $m \in \mathbb{N}$, let $J_m$ be a constant $m \times m$ matrix over $\mathcal{A}$.  
For an $m \times n$ matrix $F$ over $\mathcal{A}$, let
\bez
       \d F = F_x \, \xi_1 + \frac{1}{2} (J_m F - F J_n) \, \xi_2 \, , \qquad
      \bd F = \frac{1}{2} (J_m F - F J_n) \, \xi_1 + F_t \, \xi_2 
\eez
(also see \cite{DMH10AKNS,DKMH11acta,DMH10NLS}).\footnote{On a $1 \times 1$ matrix $f$, which is an 
element of $\mathcal{A}$, we have $\d f = f_x \, \xi_1$ and $\bd f = f_t \, \xi_2$.}
Then $\d$ and $\bd$ satisfy the Leibniz rule on a 
product of matrices, and the conditions in (\ref{bidiff_conds}) are satisfied.
In the linear systems (\ref{linsys1}) we choose a $2 \times n$ matrix $\theta$ and an $n \times 2$ matrix $\eta$. 
Then $A$ has to be a $2 \times 2$ matrix of 1-forms. Writing
\bez
         A = A_1 \xi_1 + A_2 \xi_2 \, , \qquad
         \kappa = \kappa_1 \xi_1 + \kappa_2 \xi_2 \, , \qquad
        \lambda = \lambda_1 \xi_1 + \lambda_2 \xi_2 \, ,
\eez 
with $2 \times 2$ matrices (over $\mathcal{A}$) $A_1$ and $A_2$. (\ref{linsys1}) reads
\bez
     &&   \frac{1}{2} (J_2 \theta - \theta J_n) = A_1 \theta + \theta_x \Delta + \theta \lambda_1 \, , \qquad
              \theta_t = A_2 \theta + \frac{1}{2} (J_2 \theta - \theta J_n) \Delta + \theta \lambda_2 \, , \\
     &&   \frac{1}{2} (J_n \eta - \eta J_2) = - \eta A_1 + \Gamma \eta_x + \kappa_1 \eta  \, , \qquad
              \eta_t = - \eta A_2 + \frac{1}{2} \Gamma (J_n \eta - \eta J_2) + \kappa_2 \eta  \, . 
\eez
Choosing 
\bez
            \kappa_1 = \frac{1}{2} J_n \, , \qquad
           \kappa_2 = - \frac{1}{2} \Gamma J_n \, , \qquad
          \lambda_1 = - \frac{1}{2} J_n \, , \qquad
         \lambda_2 = \frac{1}{2} J_n \Delta \, , 
\eez
the latter system simplifies to
\bez
      &&   \frac{1}{2} J_2 \theta = A_1 \theta + \theta_x \Delta  \, , \qquad 
              \theta_t = A_2 \theta + \frac{1}{2} J_2 \theta \Delta  \, , \\
     &&   \frac{1}{2} \eta J_2 =  \eta A_1 - \Gamma \eta_x \, , \qquad
             \eta_t = - \eta A_2 - \frac{1}{2} \Gamma \eta J_2 \, ,
\eez
which does not involve $J_n$ with $n \neq 2$ anymore. 
The conditions in (\ref{Delta,lambda,Gamma,kappa_eqs}) boil down to
\bez
        \Delta_x = \Delta_t = 0 \, , \qquad \Gamma_x = \Gamma_t = 0 \, ,
\eez
so that $\Delta$ and $\Gamma$, which are $n \times n$ matrices over $\mathcal{A}$, have to be constant.
(\ref{linsys2}) becomes
\bez
    \Omega_x \Delta = - \eta_x \theta \, , \qquad   \Omega_t = - \frac{1}{2} \eta J_2 \theta \, .
\eez
In addition, the $n \times n$ matrix $\Omega$ has to satisfy (\ref{genSylv}). 
Choosing $J_2 = \mathrm{diag}(\mathbf{1},-\mathbf{1})$, where $\mathbf{1}$ stands for the identity 
element of $\mathcal{A}$, and writing
\bez
       \phi = \left( \begin{array}{cc} p & f \\ q & -\tilde{p} \end{array} \right) \, , 
\eez
elaborating Corollary~\ref{cor:phi}, we have
\bez
            A = \d \phi = \left(\begin{array}{cc} p_x & f_x \\ q_x & -\tilde{p}_x \end{array}\right) \, \xi_1 
                + \left(\begin{array}{cc} 0 & f \\ -q & 0 \end{array}\right) \, \xi_2 \, ,
\eez
and (\ref{phi_eq}) takes the form
\bez
         \phi_{xt} = \frac{1}{2} \Big[ [J,\phi],\phi_x - \frac{1}{2} J \Big] \, ,
\eez
also see \cite{DKMH11acta}. This results in the system
\bez    
       f_{xt} = f - p_x f - f \tilde{p}_x \, , \qquad q_{xt} = q - q p_x - \tilde{p}_x q  \, , \qquad
       p_{xt} = (f q)_x \, , \qquad \tilde{p}_{xt} = (q f)_x   \, .
\eez
Introducing
\bez
          a := p_x - \frac{1}{2} \, \mathbf{1} \, , \qquad \tilde{a} := \tilde{p}_x - \frac{1}{2} \, \mathbf{1}  \, ,
\eez
it reads
\be
       f_{xt} = - a \, f - f \, \tilde{a} \, , \qquad q_{xt} =  - q \, a - \tilde{a} \, q  \, , \qquad
       a_t = (f q)_x \, , \qquad \tilde{a}_t = (q f)_x   \, .    \label{nc_f,a-system}
\ee

The constraint 
\bez
               q = \pm f^\dagger \, , \qquad a^\dagger = a \, , \qquad \tilde{a}^\dagger = \tilde{a} \, , 
\eez
reduces the last system to
\bez
       f_{xt} = - a \, f - f \, \tilde{a} \, , \qquad 
       a_t = \pm (f f^\dagger)_x  \, , \qquad \tilde{a}_t = \pm (f^\dagger f)_x \, .
\eez

\begin{remark}
Instead, the reduction $q=\pm f, \tilde{a} = a$ leads to
\bez
           f_{xt} = - a \, f - f \, a    \, , \qquad     a_t = \pm (f^2)_x      \, .
\eez
Choosing the upper sign, this system may be regarded, as has been suggested in \cite{DKMH11acta} (see equation (9) therein), 
as a matrix version of the self-induced transparency (SIT) equations.
\end{remark}

\subsection{The commutative case}
Let $\mathcal{A}$ now be the commutative algebra of functions on $\mathbb{R}^2$. 
Then we have 
\bez
      \mathrm{tr}(\theta \Omega^{-1} \eta) = \mathrm{tr}(\eta \theta \Omega^{-1}) 
   = \mathrm{tr}( (\Gamma \Omega - \Omega \Delta) \, \Omega^{-1}) = \mathrm{tr}( \Gamma ) - \mathrm{tr}( \Delta ) \, ,
\eez
where $\Gamma$ and $\Delta$ shall now be $n \times n$ matrices over $\mathbb{C}$, $\theta$ and $\eta$ of size 
$k \times n$ and $n \times k$, respectively. 
Hence
\bez
          \mathrm{tr}( \d (\theta \Omega^{-1} \eta)) = \mathrm{tr}( \theta \Omega^{-1} \eta)_x \, \xi_1 
          = (\mathrm{tr}( \Gamma ) - \mathrm{tr}( \Delta ))_x  \, \xi_1 = 0 \, .
\eez
As a consequence,  
\bez
            \mathrm{tr} A = 0
\eez
is a reduction that is consistent with the solution-generating method of Theorem \ref{thm:A}. 
The latter reduction means $\tilde{a} = a$, so that (\ref{nc_f,a-system}) becomes
\be
        a_t = (f q)_x  \, , \qquad  
        f_{xt} =  - 2 a \, f   \, , \qquad      
        q_{xt} = -2 a \, q     \, .  \label{f,q,a-system}
\ee
This is the first ``negative flow" of the AKNS hierarchy (see, e.g., system (12) in \cite{Kamc+Pavl02} 
and also  \cite{DMH10AKNS,Veks12}, for example).  
Soliton solutions of it have, apparently, first been found in \cite{JZZ09}, using Hirota's bilinear method. 

Since we guaranteed form-invariance of $A$ under the transformation given by Corollary~\ref{cor:phi}, 
writing 
\bez 
         \theta = \left(  \begin{array}{c} \theta_1 \\ \theta_2 \end{array}  \right) \, , \quad  \eta = (\eta_1,\eta_2) \, ,
\eez
we have
\bez
        \d \phi' &=&  A' = A - \d (\theta \, \Omega^{-1} \eta) 
            = \left(\begin{array}{cc} a' + \frac{1}{2} & f'_x \\  q'_x & - a' - \frac{1}{2}) \end{array}\right) \, \xi_1 
                + \left(\begin{array}{cc} 0 & f' \\ - q' & 0 \end{array}\right) \, \xi_2  \\
         &=&  \left(\begin{array}{cc} a + \frac{1}{2} - (\theta_1 \Omega^{-1} \eta_1)_x & (f- \theta_1 \Omega^{-1} \eta_2)_x \\  
                  (q - \theta_2 \Omega^{-1} \eta_1)_x & -a - \frac{1}{2} - (\theta_2 \Omega^{-1} \eta_2)_x \end{array}\right) \, \xi_1 
                + \left(\begin{array}{cc} 0 & f - \theta_1 \Omega^{-1} \eta_2 \\ -q + \theta_2 \Omega^{-1} \eta_1 & 0 \end{array} \right) \, \xi_2 
                   \,  ,
\eez
and hence
\bez
          a' = a - (\theta_1 \Omega^{-1} \eta_1)_x = a + (\theta_2 \Omega^{-1} \eta_2)_x \, , \qquad
          f' = f - \theta_1 \Omega^{-1} \eta_2 \, , \qquad
         q' = q - \theta_2 \Omega^{-1} \eta_1  \, ,
\eez
satisfy the same equations as $a,f,q$. 
Collecting the main results, we arrive at the following theorem, which expresses a binary Darboux transformation
for the system (\ref{f,q,a-system}).

\begin{theorem}
\label{thm:fq-system}
Let $a_0, f_0, q_0$ be a solution of (\ref{f,q,a-system}). 
Let $\theta_i$ and $\eta_i$, $i=1,2$, be solutions of the linear system
\bez
      &&   \theta_{1x} \Delta = - a_0 \, \theta_1 - f_{0x} \, \theta_2 \, , \qquad
         \theta_{2x} \Delta = a_0 \, \theta_2 - q_{0x} \, \theta_1 \, , \nonumber \\  
       &&  \theta_{1t} = \frac{1}{2} \theta_1 \Delta + f_0 \, \theta_2  \, , \qquad
         \theta_{2t} = - \frac{1}{2} \theta_2 \Delta - q_0 \, \theta_1  \, , \nonumber \\
       && \Gamma \, \eta_{1x} = a_0 \, \eta_1 + q_{0x} \, \eta_2  \, , \qquad
        \Gamma \, \eta_{2x} =  - a_0 \, \eta_2 + f_{0x} \, \eta_1   \, , \nonumber \\
      && \eta_{1t} = - \frac{1}{2} \Gamma \, \eta_1 + q_0  \, \eta_2  \, , \qquad
       \eta_{2t} = \frac{1}{2} \Gamma \, \eta_2 - f_0 \, \eta_1  \, , 
\eez
where $\Delta$ and $\Gamma$ are invertible constant $n \times n$ matrices. 
Let $\Omega$ be an invertible solution of the linear equations
\be
    &&     \Gamma \, \Omega - \Omega \, \Delta = \eta_1 \theta_1 + \eta_2 \theta_2 \, , \label{Sylv} \\
    &&    \Omega_x \Delta = - \eta_{1x} \theta_1 - \eta_{2x} \theta_2 \, , \qquad
            \Omega_t = - \frac{1}{2} \eta_1 \theta_1 +  \frac{1}{2} \eta_2 \theta_2 \, .  \label{Omega_diff_eqs}
\ee
Then 
\be
         a = a_0 - (\theta_1 \Omega^{-1} \eta_1)_x \, , \qquad
         f = f_0 - \theta_1 \Omega^{-1} \eta_2 \, , \qquad
         q = q_0 - \theta_2 \Omega^{-1} \eta_1  \, ,                              \label{f,q_new_solutions}
\ee
constitutes also a solution of (\ref{f,q,a-system}). \hfill $\qed$
\end{theorem}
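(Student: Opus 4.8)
The plan is to deduce Theorem~\ref{thm:fq-system} as a direct specialization of the abstract transformation in Corollary~\ref{cor:phi}, applied in the bidifferential calculus $\boldsymbol{\Omega} = \mathrm{Mat}(\mathcal{A}) \otimes \bigwedge \mathbb{C}^2$ built in the preceding two subsections, with $\mathcal{A}$ the commutative algebra of functions on $\mathbb{R}^2$. The first step is to fix the dictionary between the abstract data and the system (\ref{f,q,a-system}): I take $\phi$ to be the $2 \times 2$ matrix with entries $p, f$ in the first row and $q, -\tilde{p}$ in the second, and set $a = p_x - \tfrac{1}{2}$, $\tilde{a} = \tilde{p}_x - \tfrac{1}{2}$, so that $A = \d \phi$ acquires the block form recorded above. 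The key identification, already established, is that the single abstract equation (\ref{phi_eq}) unpacks componentwise into the system (\ref{nc_f,a-system}), and that after imposing the trace reduction $\mathrm{tr}\,A = 0$ (equivalently $\tilde{a} = a$) this becomes precisely (\ref{f,q,a-system}). This is what makes Corollary~\ref{cor:phi} a statement about $(a,f,q)$ in the first place.

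Next I would check that each hypothesis of Corollary~\ref{cor:phi} translates into one of the equations listed in the theorem. With the choices of $\kappa_1,\kappa_2,\lambda_1,\lambda_2$ made above, the structure equations (\ref{Delta,lambda,Gamma,kappa_eqs}) collapse to the requirement that $\Delta$ and $\Gamma$ be constant $n \times n$ matrices, which is exactly the standing assumption in Theorem~\ref{thm:fq-system}. Writing $\theta = (\theta_1, \theta_2)^T$ and $\eta = (\eta_1, \eta_2)$ and substituting the block form of $A = \d\phi$ into the linear system (\ref{linsys1}), the two matrix equations separate into the four displayed pairs of equations for $\theta_1,\theta_2$ and $\eta_1,\eta_2$; similarly, (\ref{genSylv}) becomes the Sylvester equation (\ref{Sylv}) and (\ref{linsys2}) becomes the pair of differential equations (\ref{Omega_diff_eqs}) for $\Omega$. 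This part is pure bookkeeping, but it is where the proof actually lives: one must keep careful track of the signs and of which block of $\eta$ or $\theta$ multiplies $f_0$, $q_0$ or $a_0$, since a misplaced index would produce a different (and wrong) linear system.

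Corollary~\ref{cor:phi} then guarantees that $\phi' = \phi - \theta\,\Omega^{-1}\eta + K$ solves (\ref{phi_eq}) again (I would simply take $K=0$, or use it to absorb the constant shifts in $p,\tilde{p}$). Reading off the four block entries of $A' = \d\phi'$ yields the transformation formulas (\ref{f,q_new_solutions}) for $a, f, q$, completing the construction. The one remaining point, and the only genuine subtlety, is the \emph{consistency} of the reduction $\mathrm{tr}\,A = 0$: one must verify that the new solution again has $\tilde{a} = a$. This follows from the trace identity $\mathrm{tr}(\theta\,\Omega^{-1}\eta) = \mathrm{tr}(\Gamma) - \mathrm{tr}(\Delta)$, established using (\ref{genSylv}) and the cyclicity of the trace; since the right-hand side is a constant, $\d(\mathrm{tr}(\theta\,\Omega^{-1}\eta)) = 0$, so $\mathrm{tr}\,A' = \mathrm{tr}\,A = 0$. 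This same identity is what shows the two expressions for $a'$ implicit in (\ref{f,q_new_solutions}) agree, and it is the step I would expect to need the most care, the rest being the mechanical specialization of Theorem~\ref{thm:A}.
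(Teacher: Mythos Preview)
Your proposal is correct and follows essentially the same route as the paper: the theorem is stated with a $\qed$ precisely because its proof \emph{is} the preceding derivation in Sections~\ref{subsec:bDT_in_bdc}--4.3, which specializes Corollary~\ref{cor:phi} to the chosen bidifferential calculus, translates (\ref{linsys1}), (\ref{genSylv}), (\ref{linsys2}) into the displayed linear system and the equations for $\Omega$, reads off (\ref{f,q_new_solutions}) from the block decomposition of $\phi' = \phi - \theta\,\Omega^{-1}\eta$, and invokes the trace identity $\mathrm{tr}(\theta\,\Omega^{-1}\eta) = \mathrm{tr}(\Gamma) - \mathrm{tr}(\Delta)$ to ensure the reduction $\tilde{a}=a$ is preserved. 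Your identification of the trace step as the only genuine subtlety matches the paper's emphasis.
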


If we impose the condition
\be
              q = \pm f^\ast \, ,     \label{reduction}
\ee
the system (\ref{f,q,a-system}) reduces to (\ref{a,f_PDE}), if we choose the plus sign.
It remains to implement the above reduction in the solution-generating method.

\subsection{The reduction $q = f^\ast$}
Let us set
\be
      q = f^\ast \, , \qquad   \theta = \eta^\dagger \, , \qquad \Delta = - \Gamma^\dagger \, .
                \label{conj_red-}
\ee
Then Theorem~\ref{thm:fq-system} implies Theorem~\ref{thm:f_eq}. 

\paragraph{Proof of Theorem~\ref{thm:f_eq}.}  The linear system in Theorem~\ref{thm:fq-system} 
reduces to (\ref{eta_x}) and (\ref{eta_t}), by using (\ref{conj_red-}), and (\ref{Sylv}) becomes (\ref{Lyap}). 
If $\Gamma$ and $-\Gamma^\dagger$ have no eigenvalue in common, i.e., 
$\mathrm{spec}(\Gamma) \cap \mathrm{spec}(-\Gamma^\dagger) = \emptyset$, 
the Lyapunov equation (\ref{Lyap}) is known to have a unique solution $\Omega$. By taking its conjugate, we can 
then deduce that 
\be
              \Omega^\dagger = \Omega \, ,    \label{Omega_Hermitian}
\ee
which in turn implies that the equations 
\bez
   \Omega_x \Gamma^\dagger = \eta_{1x} \eta_1^\dagger + \eta_{2x} \eta_2^\dagger \, , \qquad
   \Omega_t = - \frac{1}{2} \eta_1 \eta_1^\dagger +  \frac{1}{2} \eta_2 \eta_2^\dagger \, ,   \label{Omega_eqs}
\eez
resulting from (\ref{Omega_diff_eqs}), are satisfied as a consequence of the equations 
\bez
         \Omega_x \Gamma^\dagger - \eta_x \eta^\dagger + \Gamma \Omega_x - \eta \, \eta_x^\dagger = 0 \, , \qquad
        \Gamma \, (\Omega_t + \frac{1}{2} \eta_1 \eta_1^\dagger -  \frac{1}{2} \eta_2 \eta_2^\dagger)
        + (\Omega_t + \frac{1}{2} \eta_1 \eta_1^\dagger -  \frac{1}{2} \eta_2 \eta_2^\dagger) \, \Gamma^\dagger = 0 \, ,
\eez
obtained by differentiation of (\ref{Lyap}) with respect to $x$, respectively $t$, and using (\ref{eta_t}). 
Furthermore,
\bez  
        (\eta^\dagger \, \Omega^{-1} \eta)^\dagger =  \eta^\dagger \, \Omega^{-1} \eta \, , 
\eez
so that
\bez
        (\eta_1^\dagger \, \Omega^{-1} \eta_2)^\ast = \eta_2^\dagger \, \Omega^{-1} \eta_1 \, ,
\eez
and the last two equations in (\ref{f,q_new_solutions}) indeed coincide if (\ref{conj_red-}) holds.  
Furthermore, $\eta_1^\dagger \, \Omega^{-1} \eta_1$ is real, so that $a$ is real if $a_0$ is real, which is required 
by the linear system.
\hfill $\qed$

\section{Conclusion}
\label{sec:conclusion}
In this work we explored the nonlinear PDE (\ref{3rd-order_eq}), which is completely integrable (in the sense that a 
Lax pair exists) if the dependent variable is \emph{real}. Expressing (\ref{3rd-order_eq}) as the equivalent system (\ref{a,f_PDE}),  
in the \emph{complex} case integrability apparently requires that the function $a$ has to be real \cite{Sako22}. Accordingly, 
the vectorial binary Darboux transformation, which we generated  from 
a universal binary Darboux transformation in bidifferential calculus, only works for (\ref{a,f_PDE}) with the 
restriction to \emph{real} $a$.

We exploited the vectorial binary Darboux transformation to obtain multi-soliton solutions of (\ref{3rd-order_eq}), 
also on a plane wave background solution. This includes counterparts of the Akhmediev and Kuznetsov-Ma breathers and 
rogue wave solutions of the NLS equation. 
All these solutions can still be generalized by using Proposition~\ref{prop:gencov}. 

Little is known about the non-integrable sector of (\ref{3rd-order_eq}), which 
is  (\ref{a,f_PDE}) if the function $a$ has a non-zero imaginary part. In this sector, the only exact solution we know 
is the fairly trivial one  $f = C \, e^{\mathrm{i} \, \alpha(x) - \beta \, t}$, where $\alpha$ is any real function with 
non-vanishing derivative, $\beta \neq 0$ is a real constant and $C \neq 0$ a complex constant.  
That switching on an imaginary part of the function $a$ in (\ref{a,f_PDE})  apparently breaks complete integrability,  
is an interesting observation \cite{Sako22}, which deserves further exploration.

 More generally, we derived a vectorial binary Darboux transformation for the system (\ref{f,q,a-system}), 
which is the first ``negative flow" of the AKNS hierarchy. 

Changing the plus sign in (\ref{3rd-order_eq}) to a minus sign, leads us to a ``defocusing" version of it, using the familiar 
terminology for the corresponding versions of the NLS equation.  In this case we have to deal with the reduction 
condition (\ref{reduction}) with the choice of the minus sign and implement it in the binary Darboux transformation 
for the system (\ref{f,q,a-system}). This has not been elaborated in this work.

Most likely, the simple multi-soliton solutions admit generalizations in terms of Jacobi elliptic functions. 
We have seen that there are even \emph{two} such extensions of the 1-soliton solution. 

The plots in this work have been generated using {\sc Mathematica} \cite{Mathematica}. 

\vspace{.2cm}
\noindent
\textbf{Acknowledgment.} I would like to thank Rusuo Ye for reviving my interest in the bidifferential 
calculus in \cite{DMH10AKNS,DKMH11acta} and for an interesting communication. 
I'm greatly indebted to Maxim Pavlov for reminding me of the origin of the PDE (\ref{3rd-order_eq}) 
and for further help.

\small

\normalsize

\end{document}